\tikzstyle{printersafe}=[snake=snake,segment amplitude=0 pt]
\newtheorem{proposition}{\em Proposition}[section]
\newtheorem{theorem}{\em Theorem}[section]
\newtheorem{conjecture}{\em Conjecture}[section]
\newtheorem{definition}{\em Definition}[section]
\newtheorem{lemma}{\em Lemma}[section]
\newtheorem{remark}{\em Remark}[section]
\newtheorem{corollary}{\em Corollary}[section]
\journal{Sample Journal}
\begin{document}

\begin{frontmatter}
								
	\title{Graphs, Disjoint Matchings and Some Inequalities}
								
	\author{Lianna Hambardzumyan\corref{cor1}}
	\ead{lianna.hambardzumyan@gmail.com}
	\cortext[cor1]{Corresponding author}
								
	\author{Vahan Mkrtchyan}
	\ead{vahanmkrtchyan2002@ysu.am}
								
	\address{Department of Informatics and Applied Mathematics, Yerevan State University, Yerevan, 0025, Armenia}

	\begin{abstract}For $k \geq 1$ and a graph $G$ let $\nu_k(G)$ denote the size of a maximum $k$-edge-colorable subgraph of $G$. Mkrtchyan, Petrosyan and Vardanyan proved that $\nu_2(G)\geq \frac45\cdot |V(G)|$, $\nu_3(G)\geq \frac76\cdot |V(G)|$ for any cubic graph $G$ ~\cite{samvel:2010}. They were also able to show that if $G$ is a cubic graph, then $\nu_2(G)+\nu_3(G)\geq 2\cdot |V(G)|$ ~\cite{samvel:2014} and $\nu_2(G) \leq \frac{|V(G)| + 2\cdot \nu_3(G)}{4}$ ~\cite{samvel:2010}. In the first part of the present work, we show that the last two inequalities imply the first two of them. 
		Moreover, we show that $\nu_2(G) \geq \alpha \cdot \frac{|V(G)| + 2\cdot \nu_3(G)}{4} $, where 
		\\
						
		$\alpha=\frac{16}{17}$, if $G$ is a cubic graph, 
		\\
						
		$\alpha=\frac{20}{21}$, if $G$ is a cubic graph containing a perfect matching, 
		\\
					
		$\alpha=\frac{44}{45}$, if $G$ is a bridgeless cubic graph.
		We also investigate the parameters $\nu_2(G)$ and $\nu_3(G)$ in the class of claw-free cubic graphs. We improve the lower bounds for $\nu_2(G)$ and $\nu_3(G)$ for  claw-free bridgeless cubic graphs to $\nu_2(G)\geq \frac{35}{36}\cdot |V(G)|$ 
		($n \geq 48$), $\nu_3(G)\geq \frac{43}{45}\cdot |E(G)|$. On the basis of these inequalities we are able to improve the  coefficient $\alpha$ for bridgeless claw-free cubic graphs. 
		
		In the second part of the work, we prove lower bounds for $\nu_k(G)$ in terms of $\frac{\nu_{k-1}(G)+\nu_{k+1}(G)}{2}$ for $k\geq 2$ and graphs $G$ containing at most $1$ cycle. We also present the corresponding conjectures for bipartite and nearly bipartite graphs.
	\end{abstract}
								
	\begin{keyword}
																
		Cubic graph \sep Bridgeless cubic graph \sep Claw-free cubic graph \sep Claw-free bridgeless cubic graph \sep Tree \sep Unicyclic graph \sep Pair and triple of matchings \sep Edge-coloring \sep Parsimonious edge-coloring
	\end{keyword}
								
\end{frontmatter}


\section{Introduction}
\label{intro}

In this paper graphs are assumed to be finite, undirected and without loops, though they may contain multi-edges. 

The set of vertices and edges of a graph $G$ will be denoted by $V(G)$ and $E(G)$, respectively. Sometimes we will denote $|V(G)|$ by $n$. The degree of a vertex $u$ of $G$ is denoted by $d_G(u)$. Let $\Delta(G)$ be the maximum degree of a vertex of $G$. A graph is {\em cubic} if every vertex has degree $3$.

A {\em matching} in a graph is a set of edges without common vertices. A matching which covers all vertices of the graph is called a {\em perfect matching}.
A {\em $k$-factor} of a graph is a spanning $k$-regular subgraph. In particular, the edge-set of a $1$-factor is a perfect matching. Moreover, a $2$-factor is a set of cycles in the graph that covers all its vertices.  
We will denote the smallest possible number of odd cycles in a $2$-factor of a cubic graph $G$ by $o(G)$.

A part of this paper works with subclass of cubic graphs, which are called {\em claw-free} cubic graphs. A graph is {\em claw-free} if it has no induced subgraph isomorphic to $K_{1,3}$.

A graph $G$ is called {\em $k$-edge colorable}, if its edges can be assigned $k$ colors so that adjacent edges receive different colors. A subgraph $H$ of a graph $G$ is called {\em maximum $k$-edge-colorable}, if $H$ is $k$-edge-colorable and contains maximum number of edges among all $k$-edge-colorable 
graphs. If $H$ is a $k$-edge-colorable subgraph of $G$ and $e\notin E(H)$, then we will say that $e$ is an {\em uncolored} edge with respect to $H$. If it is clear from the context with respect to which subgraph an edge is uncolored, we will not mention the subgraph.

By a classical result due to Shannon \cite{Shannon:1949,stiebitz:2012,vizing:1964}, we have that cubic graphs are $4$-edge-colorable. It is an interesting and  useful problem to investigate the sizes of subgraphs of cubic graphs that are colorable only with $1$, $2$ or $3$ colors. 

For $k\geq 1$ and a graph $G$ let 
\begin{eqnarray*}
	\nu_k(G)=\max \{|E(H)|: H \textrm{ is a } k\textrm{-edge-colorable subgraph of } G \}.
\end{eqnarray*}

The {\em resistance $r_3(G)$} of a cubic graph $G$ is the minimum of number of edges that have to be removed from $G$ in order to obtain a $3$-edge-colorable graph.  Note that $r_3(G) = |E(G)| - \nu_3(G)$. 

Albertson and Haas \cite{haas:1996,haas:1997}, Steffen \cite{steffen:1998,steffen:2004} and Mkrtchyan et al. \cite{samvel:2010} investigated the lower bounds for $\frac{\nu_k(G)}{|V(G)|}$ in cubic graphs. As a result, in \cite{samvel:2010} an interesting relation between $\nu_2(G)$ and $\nu_3(G)$ 
is proved, which states that for any cubic graph $G$
\begin{eqnarray*}
	\nu_2(G) \leq \frac{|V(G)| + 2\cdot \nu_3(G)}{4}.
\end{eqnarray*} Observe that when $G$ contains a perfect matching ($\nu_1(G)=\frac{|V(G)|}{2}$), in particular, when $G$ is a bridgeless cubic graph, the above-mentioned inequality can be written as
\begin{eqnarray*}
	\nu_2(G) \leq \frac{\nu_1(G) + \nu_3(G)}{2}.
\end{eqnarray*}

The lower bounds for $\frac{\nu_k(G)}{|V(G)|}$ in cubic graphs has been investigated in \cite{bollobas:1978,henning:2007,nishizeki:1981,nishizeki:1979,weinstein:1974} when $k=1$, and for regular graphs of high girth in \cite{flaxman:2007}. This lower bounds has also been investigated in the case when the graphs need not be cubic \cite{miXumbFranciaciq:2013,Kaminski:2014,Rizzi:2009}.

In the present work we give short proofs of main results of Mkrtchyan et. al. ~\cite{samvel:2010}. We also prove lower bounds for $\nu_2(G)$ in terms of $|V(G)|$ and $\frac{|V(G)| + 2\cdot \nu_3(G)}{4}$ in the following sub-classes of cubic graphs: 
\begin{enumerate}
	\item
	      \begin{enumerate}        
	      	\item cubic graphs
	      	\item cubic graphs containing a perfect matching
	      	\item bridgeless cubic graphs
	      \end{enumerate}
	\item
	      \begin{enumerate}
	      	\item claw-free cubic graphs
	      	\item claw-free bridgeless cubic graphs
	      \end{enumerate}
\end{enumerate} In some cases our lower bounds are best-possible. 

In the second part of the work, we investigate $2$ conjectures for bipartite and nearly bipartite graphs. Recall that a graph $G$ is bipartite, if $V(G)$ can be partitioned into $2$ sets $V_1$ and $V_2$, such that any edge of $G$ joins a vertex from $V_1$ to a vertex from $V_2$. $G$ is nearly bipartite, if $G$ contains a vertex $w$, such that $G-w$ is bipartite. Our conjectures state: 

\begin{conjecture}\label{NearlyBipkConj} For any $k\geq 2$ and a nearly bipartite graph $G$, 
\begin{eqnarray*}
	\nu_{k}(G) \geq \left \lfloor \frac{\nu_{k-1}(G) + \nu_{k+1}(G)}{2} \right \rfloor.
\end{eqnarray*}
\end{conjecture}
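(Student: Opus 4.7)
My plan is a ``merge and split'' argument using K\"{o}nig's edge-coloring theorem on an auxiliary bipartite multigraph.

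Let $H_{k-1}$ and $H_{k+1}$ be maximum $(k-1)$- and $(k+1)$-edge-colorable subgraphs of $G$, equipped with proper edge-colorings $c_{k-1}$ and $c_{k+1}$. Form the multigraph $M$ on $V(G)$ whose edge-multiset is the disjoint union of $E(H_{k-1})$ and $E(H_{k+1})$; an edge common to both $H_{k-1}$ and $H_{k+1}$ appears with multiplicity $2$. Then $|E(M)|=\nu_{k-1}(G)+\nu_{k+1}(G)$ and $\Delta(M)\leq(k-1)+(k+1)=2k$. The point is that if I can split $E(M)$ into two submultigraphs $M_1,M_2$ each of which is $k$-edge-colorable, then the underlying simple subgraphs $\widetilde{M}_1,\widetilde{M}_2\subseteq G$ are $k$-edge-colorable and satisfy
\begin{equation*}
|E(\widetilde{M}_1)|+|E(\widetilde{M}_2)|=|E(M)|-\delta,
\end{equation*}
where $\delta$ counts the doubled edges of $M$ whose two copies ended up on the same side of the split. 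Hence $\nu_k(G)\geq\lceil(|E(M)|-\delta)/2\rceil$, which yields the conjectured floor inequality as soon as $\delta\leq 1$.

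If $G$ were bipartite, $M$ would be a bipartite multigraph of maximum degree at most $2k$, and K\"{o}nig's theorem would give a proper $2k$-edge-coloring $\varphi$ of $M$. Grouping the $2k$ colors into two classes of $k$ produces submultigraphs of maximum degree $\leq k$, each with a proper $k$-edge-coloring inherited from $\varphi$; this is the desired split. To control $\delta$, introduce the auxiliary graph $\mathcal{H}$ on vertex set $\{1,\dots,2k\}$ with an edge $\{\varphi(e'),\varphi(e'')\}$ for every doubled edge $e$ of $M$ whose two copies $e',e''$ receive colors $\varphi(e'),\varphi(e'')$. Then a balanced bipartition of $\{1,\dots,2k\}$ that cuts every edge of $\mathcal{H}$ corresponds to a split with $\delta=0$, i.e.\ a balanced proper $2$-coloring of $\mathcal{H}$. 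The goal of this step is to choose $\varphi$ via Kempe-chain swaps so that $\mathcal{H}$ is bipartite and its parts can be extended to equal halves of size $k$ using the isolated colors; allowing one odd cycle in $\mathcal{H}$ still gives $\delta\leq 1$, which is sufficient.

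For nearly bipartite $G$, fix the exceptional vertex $w$ with $G-w$ bipartite. The multigraph $M-w$ is bipartite with maximum degree $\leq 2k$, so the previous step applies and gives a good $2k$-edge-coloring and a balanced split on $M-w$. The at most $2k$ edges of $M$ incident to $w$ form a star at $w$ and must be reinserted: each such edge needs a color of its assigned half that is not already used at its non-$w$ endpoint, and distinct colors among themselves at $w$. I would carry this out by a greedy/Vizing-type recoloring local to $w$, absorbing the extra conflicts into at most one additional unit of $\delta$.

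The main obstacle is the claim $\delta\leq 1$ in the bipartite step: one must show that some proper $2k$-edge-coloring $\varphi$ of the bipartite multigraph $M$ makes the conflict graph $\mathcal{H}$ bipartite (or have at most one odd cycle) while keeping its parts of size $\leq k$. Kempe swaps move edges around in $\mathcal{H}$ but I do not see a clean invariant that forces convergence to such a $\varphi$; this is presumably why the statement remains a conjecture. A secondary obstacle is that reinserting the edges at $w$ in the nearly bipartite case may force recolorings in $M-w$ that destroy the carefully arranged low-$\delta$ property, so the two steps have to be performed simultaneously rather than sequentially.
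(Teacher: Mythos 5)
The statement you are attacking is stated in the paper only as a conjecture (Conjecture \ref{NearlyBipkConj}); the authors do not prove it in general, and only verify it for graphs containing at most one cycle (Theorem \ref{thm:TreeUniLowBound}), by induction on $|V(G)|$ with a case analysis on vertex degrees and a final direct argument on the unique cycle showing $x_{k}\leq\left\lceil x_{k-1}/2\right\rceil$. So there is no proof in the paper to compare yours against, and your proposal, as you yourself concede, is not a proof either. The outer skeleton is sound: merging $H_{k-1}$ and $H_{k+1}$ into a multigraph $M$ with $\Delta(M)\leq 2k$, splitting a proper $2k$-edge-coloring into two groups of $k$ colors, and observing that $\delta\leq 1$ would give $\nu_k(G)\geq\left\lceil(\nu_{k-1}(G)+\nu_{k+1}(G)-\delta)/2\right\rceil\geq\left\lfloor(\nu_{k-1}(G)+\nu_{k+1}(G))/2\right\rfloor$ is all correct.

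The genuine gap is exactly where you locate it, and it is not a technicality. The whole scheme reduces the conjecture to the claim that some proper $2k$-edge-coloring $\varphi$ of $M$ admits a balanced partition of the colors $\{1,\dots,2k\}$ cutting all but at most one edge of the conflict graph $\mathcal{H}$. But $\mathcal{H}$ lives on only $2k$ vertices and can carry up to $\nu_{k-1}(G)$ edges (one per doubled edge of $M$), so for fixed $k$ and large $G$ it is in general a dense multigraph on a bounded vertex set; no balanced bipartition of such a graph cuts all but one edge, so $\delta\leq 1$ must come entirely from the freedom in choosing $\varphi$, and you give no invariant or potential function showing that Kempe swaps converge to a coloring whose conflict graph is (near-)bipartite with parts extendable to equal halves of size $k$. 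The second step, reinserting the at most $2k$ edges of $M$ incident to the exceptional vertex $w$, is equally unsupported: a Vizing-type recoloring at $w$ can propagate along alternating chains through all of $M-w$ and destroy the low-$\delta$ structure, and ``one additional unit of $\delta$'' already breaks the budget if the bipartite step consumed the allowed unit. As written, this is a plausible research programme rather than a proof; if you want a provable special case, the paper's elementary route (pendant and degree reductions by induction, ending with the cycle inequality $x_{k}\leq\left\lceil x_{k-1}/2\right\rceil$) is the one that actually closes, and it does so without ever touching the union of two optimal subgraphs.
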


\begin{conjecture}\label{BipkConj} For any $k\geq 2$ and a bipartite graph $G$, 
\begin{eqnarray*}
	\nu_k(G) \geq \frac{\nu_{k-1}(G) + \nu_{k+1}(G)}{2}.
\end{eqnarray*}
\end{conjecture}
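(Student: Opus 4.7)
My plan is to prove the conjecture via the linear-programming relaxation of the $k$-bounded subgraph problem, using that the vertex-edge incidence matrix of a bipartite graph is totally unimodular. The key step is to establish the LP identity
\begin{equation*}
\nu_k(G) \;=\; \max\Bigl\{\,\sum_{e\in E(G)} x_e\ :\ 0\le x_e\le 1,\ \sum_{e\ni v} x_e \le k\ \text{for every } v\in V(G)\Bigr\}
\end{equation*}
for every bipartite $G$. The inequality $\ge$ is immediate by plugging in the indicator of a maximum $k$-edge-colorable subgraph. For the reverse inequality, the LP's constraint matrix is the vertex-edge incidence matrix of $G$ stacked with an identity block (encoding $x_e\le 1$); since $G$ is bipartite the former is totally unimodular, and augmenting a totally unimodular matrix with an identity block preserves this property, so the LP has an integral optimum. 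That integral optimum is the maximum number of edges of a subgraph $H\subseteq G$ with $\Delta(H)\le k$, and by K\"onig's edge-coloring theorem any such $H$ is $k$-edge-colorable, so the optimum equals $\nu_k(G)$.

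Once the characterization is in hand, the bound collapses to a one-line averaging argument: given optima $x^{(k-1)}$ and $x^{(k+1)}$ at parameters $k-1$ and $k+1$, the midpoint $x := \tfrac12\bigl(x^{(k-1)}+x^{(k+1)}\bigr)$ satisfies $0\le x_e\le 1$ coordinatewise and $\sum_{e\ni v}x_e\le\tfrac12((k-1)+(k+1))=k$ at every vertex $v$, so it is feasible at parameter $k$. Evaluating the objective gives
\begin{equation*}
\nu_k(G)\;\ge\;\sum_e x_e\;=\;\tfrac12\bigl(\nu_{k-1}(G)+\nu_{k+1}(G)\bigr),
\end{equation*}
which is exactly Conjecture~\ref{BipkConj}.

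The main reason I do not expect a clean purely combinatorial proof is the following. The natural combinatorial route takes optimal $H_{k-1}$ and $H_{k+1}$, forms their multiset union $M$ (a bipartite multigraph with $\Delta(M)\le 2k$), $2k$-edge-colors $M$ by K\"onig, and tries to pick $k$ color classes whose union projects back to a $k$-edge-colorable subgraph of $G$. Each edge in $E(H_{k-1})\cap E(H_{k+1})$, however, contributes two parallel copies of distinct colors to $M$, and a uniformly random choice of $k$ out of the $2k$ colors separates such a pair into different halves only with probability $k/(2k-1)<1$, so the averaging deficit is proportional to $|E(H_{k-1})\cap E(H_{k+1})|$ and can be strictly positive. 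The LP formulation bypasses this obstruction precisely because the cap $x_e\le 1$ automatically merges the doubled contributions, which is the reason the fractional averaging above succeeds.
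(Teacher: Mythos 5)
Your argument is correct, and it establishes strictly more than the paper does: Conjecture~\ref{BipkConj} is left \emph{open} in the paper, which only verifies it for bipartite graphs containing at most one cycle (Theorem~\ref{thm:BipTreeUniLowBound}). Every step of your proof checks out. A $k$-edge-colorable subgraph has maximum degree at most $k$, and conversely, by K\"onig's edge-coloring theorem, any subgraph of a bipartite (multi)graph with maximum degree at most $k$ is $k$-edge-colorable, so $\nu_k(G)$ equals the maximum number of edges in a subgraph of maximum degree at most $k$; the constraint matrix of your LP is the bipartite incidence matrix stacked on an identity block, hence totally unimodular (this holds for multigraphs as well, which matters since the paper allows multi-edges), so by Hoffman--Kruskal the polytope $\{x:\ 0\le x\le 1,\ \sum_{e\ni v}x_e\le k\}$ is integral and the LP optimum equals $\nu_k(G)$; and taking the midpoint of optima at parameters $k-1$ and $k+1$ is exactly the concavity of the LP value in the right-hand side. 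This is a genuinely different route from the paper's. The authors argue by induction on $|V(G)|$, repeatedly detaching pendant structures and analyzing a deepest vertex of degree at least $3$ according to whether its degree is $\geq k+2$, $=k+1$, $=k$, or $\leq k-1$, until only the cycle $C$ with pendant edges remains, where they compare the numbers $x_{k-1},x_k$ of cycle edges that must be deleted and show $x_k\le x_{k-1}/2$ when $C$ is even. That inductive machinery is what lets them also treat non-bipartite unicyclic graphs with a floor (Theorem~\ref{thm:TreeUniLowBound}), a case your LP argument cannot reach, since for non-bipartite graphs the incidence matrix is not totally unimodular and a maximum-degree-$k$ subgraph need not be $k$-edge-colorable. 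For the bipartite conjecture itself, however, your proof is complete and should be recorded as a theorem settling Conjecture~\ref{BipkConj} in full; your closing remark about the failure of the naive union-and-recolor argument is a fair diagnosis but is not needed for the proof.
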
 Our main results state that these conjectures are true for graphs $G$ containing at most $1$ cycle. 

Terms and concepts that we do not define, can be found in \cite{harary:1969,west:1996}.

\section{Inequalities and bounds for cubic graphs}
\label{app:theorem}


First we formulate a proposition that will be helpful for our presentation of results. It has been applied already for bridgeless cubic graphs in \cite{steffen:2004}. Here we state and prove it for general graphs.

\begin{proposition} \label{theorem:2/3} 
	For any graph $G$ 
	\[ \nu_2(G) \geq \frac{2}{3}\cdot \nu_3(G). \]
\end{proposition}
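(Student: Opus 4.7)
The plan is to use a simple averaging (pigeonhole) argument applied to a maximum $3$-edge-colorable subgraph of $G$.

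First I would fix a maximum $3$-edge-colorable subgraph $H$ of $G$, so that $|E(H)| = \nu_3(G)$. Since $H$ is $3$-edge-colorable, its edge set decomposes into three matchings $M_1, M_2, M_3$ of $G$ (some possibly empty), with $|M_1| + |M_2| + |M_3| = \nu_3(G)$.

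Next, I would observe that the two largest of these matchings must together contain at least a $\tfrac{2}{3}$-fraction of the edges of $H$. Formally, if we discard the smallest matching, which has size at most $\tfrac{1}{3}(|M_1| + |M_2| + |M_3|)$, the remaining two matchings $M_i, M_j$ satisfy
\[
|M_i| + |M_j| \;\geq\; \tfrac{2}{3}\bigl(|M_1| + |M_2| + |M_3|\bigr) \;=\; \tfrac{2}{3}\,\nu_3(G).
\]
The subgraph formed by $M_i \cup M_j$ is $2$-edge-colorable (the two matchings serve as the two color classes), so it is a witness for $\nu_2(G)$, giving $\nu_2(G) \geq |M_i| + |M_j| \geq \tfrac{2}{3}\,\nu_3(G)$.

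There is essentially no obstacle here: the argument is just the observation that among three nonnegative numbers, the two largest sum to at least two-thirds of the total. The only thing to be careful about is invoking the existence of a $3$-edge-coloring of $H$ (not of $G$), which is immediate from the definition of a $k$-edge-colorable subgraph.
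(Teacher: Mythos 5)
Your argument is correct and is essentially the paper's own proof: the authors likewise take the three color classes of a maximum $3$-edge-colorable subgraph, order them by size, and note that the two largest sum to at least $\tfrac{2}{3}\cdot\nu_3(G)$, which bounds $\nu_2(G)$ from below. No gaps; the averaging step and the witness for $\nu_2(G)$ are exactly as in the paper.
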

\begin{proof}
	Let $(H, H', H'')$ be a triple of edge-disjoint matchings of $G$ with $|H|+|H'|+|H''|=\nu_3(G)$ and $|H| \ge |H'| \ge |H''|$. Then $\nu_2(G) \ge |H| + |H'| \ge \frac{2}{3} \nu_3(G). $  
\end{proof}
\noindent

\medskip

In ~\cite{samvel:2010} Mkrtchyan, Petrosyan and Vardanyan proved that
\begin{theorem}\label{theorem:MainResultsCubics} For any cubic graph $G$
	\begin{enumerate}
		\item[(1)] $\nu_2(G)\geq \frac45\cdot |V(G)|$,
		\item[(2)] $\nu_3(G)\geq \frac76\cdot |V(G)|$,
		\item[(3)] $\nu_2(G)+\nu_3(G)\geq 2\cdot |V(G)|$,
		\item[(4)] $\nu_2(G) \leq \frac{|V(G)| + 2\cdot \nu_3(G)}{4}$.
	\end{enumerate}
\end{theorem}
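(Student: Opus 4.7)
The plan is to treat parts (3) and (4) as already-proven inputs from \cite{samvel:2014} and \cite{samvel:2010}, respectively, and derive (1) and (2) from them in combination with Proposition~\ref{theorem:2/3}. The point powering the short proof is that (3) and (4) are two linear inequalities in the pair $(\nu_2(G), \nu_3(G))$; supplemented by the ratio bound $\nu_2(G) \geq \frac{2}{3}\nu_3(G)$ from Proposition~\ref{theorem:2/3}, they form a small linear system that pins down separate lower bounds for $\nu_2(G)$ and $\nu_3(G)$, which are exactly (1) and (2).

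To obtain (1), I would rearrange (3) as $\nu_3(G) \geq 2|V(G)| - \nu_2(G)$ and feed it into Proposition~\ref{theorem:2/3} to get
\begin{equation*}
\nu_2(G) \geq \frac{2}{3}\cdot\nu_3(G) \geq \frac{2}{3}\cdot(2|V(G)| - \nu_2(G)).
\end{equation*}
Collecting the $\nu_2(G)$ terms on the left yields $\frac{5}{3}\cdot\nu_2(G) \geq \frac{4}{3}\cdot|V(G)|$, i.e., $\nu_2(G) \geq \frac{4}{5}\cdot|V(G)|$. To obtain (2), I would chain (3) and (4) directly: using $\nu_2(G) \geq 2|V(G)| - \nu_3(G)$ from (3) and $\nu_2(G) \leq \frac{|V(G)| + 2\cdot\nu_3(G)}{4}$ from (4) gives
\begin{equation*}
2|V(G)| - \nu_3(G) \leq \frac{|V(G)| + 2\cdot\nu_3(G)}{4}.
\end{equation*}
Multiplying through by $4$ yields $8|V(G)| - 4\cdot\nu_3(G) \leq |V(G)| + 2\cdot\nu_3(G)$, which rearranges to $6\cdot\nu_3(G) \geq 7|V(G)|$, i.e., $\nu_3(G) \geq \frac{7}{6}\cdot|V(G)|$.

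The main conceptual obstacle is not computational but structural: one has to notice that the two "aggregate" inequalities (3) and (4) together with the elementary ratio bound of Proposition~\ref{theorem:2/3} already encode the single-variable lower bounds (1) and (2), so the latter require no separate extremal argument. Once this is spotted, the derivation above is a few lines of algebra. The deep content lives entirely inside the two cited statements; in particular I would not reprove (3), whose argument in \cite{samvel:2014} is a delicate extremal analysis of a maximum pair of disjoint matchings and its interaction with the bridges of $G$, and which is considerably longer than the two reductions presented above. I would similarly take (4) as cited from \cite{samvel:2010}, though if pressed I would sketch it by starting from a maximum $2$-edge-colorable subgraph $M_1 \cup M_2$ and absorbing a carefully chosen subset of the uncolored edges into a third matching.
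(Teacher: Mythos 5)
Your proposal is correct and matches the paper's treatment exactly: the paper also cites (3) and (4) from \cite{samvel:2014} and \cite{samvel:2010} without reproving them, and derives (1) by combining (3) with Proposition~\ref{theorem:2/3} and (2) by chaining (3) with (4), via the same algebra you give. No discrepancies.
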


The proofs of (1) and (2) of Theorem \ref{theorem:MainResultsCubics} given in ~\cite{samvel:2010} are long. Here we show that (3) and (4) imply (1) and (2).

\medskip

\begin{theorem}\label{theorem:4/5} For every cubic graph $G$
	\[ \nu_2(G) \geq \frac{4}{5}\cdot |V(G)|. \]
\end{theorem}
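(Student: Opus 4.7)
The plan is to combine Proposition~\ref{theorem:2/3} with part~(3) of Theorem~\ref{theorem:MainResultsCubics}. Rewriting Proposition~\ref{theorem:2/3} in the equivalent form $\nu_3(G) \leq \frac{3}{2}\nu_2(G)$ (obtained by multiplying both sides of $\nu_2(G) \geq \tfrac{2}{3}\nu_3(G)$ by $\tfrac{3}{2}$), I would substitute this upper bound on $\nu_3(G)$ into the inequality $\nu_2(G) + \nu_3(G) \geq 2\cdot |V(G)|$ of part (3). This yields
\[
\nu_2(G) + \tfrac{3}{2}\nu_2(G) \;\geq\; \nu_2(G) + \nu_3(G) \;\geq\; 2\cdot |V(G)|,
\]
which simplifies to $\tfrac{5}{2}\nu_2(G) \geq 2\cdot |V(G)|$ and hence to $\nu_2(G) \geq \tfrac{4}{5}\cdot |V(G)|$, the desired inequality.

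There is essentially no obstacle: both ingredients are already established, and Proposition~\ref{theorem:2/3} is stated for arbitrary graphs, so it specializes to the cubic graph $G$ without any adjustment. The entire argument is a one-line algebraic combination.

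I would remark that part (4) of Theorem~\ref{theorem:MainResultsCubics} is not actually used in this particular derivation; the key inputs are (3) together with the general bound $\nu_3 \leq \tfrac{3}{2}\nu_2$ from Proposition~\ref{theorem:2/3}. One should expect (4) to enter only when deducing part (2), namely the bound $\nu_3(G) \geq \tfrac{7}{6}\cdot |V(G)|$: there, adding $\nu_3(G)$ to both sides of (4) and then invoking (3) eliminates $\nu_2(G)$ and produces the factor $\tfrac{7}{6}$ directly.
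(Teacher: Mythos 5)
Your argument is correct and is essentially identical to the paper's proof: the paper also combines inequality (3) of Theorem~\ref{theorem:MainResultsCubics} with Proposition~\ref{theorem:2/3} (adding the former with coefficient $\tfrac{2}{3}$ to the latter), which is just a rescaling of your substitution $\nu_3(G)\leq \tfrac{3}{2}\nu_2(G)$. Your closing remark is also accurate — the paper likewise uses (4) only in the derivation of the $\tfrac{7}{6}\cdot|V(G)|$ bound.
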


\begin{proof} The claim follows immediately by a linear combination of 
inequality (3) of Theorem \ref{theorem:MainResultsCubics} and Proposition \ref{theorem:2/3}: add the former with coefficient $\frac{2}{3}$ to the 
latter.

\end{proof}

\medskip

\begin{theorem}\label{theorem:7/6} For every cubic graph $G$
	\[ \nu_3(G) \geq \frac{7}{6}\cdot |V(G)|. \]
\end{theorem}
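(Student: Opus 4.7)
The plan is to mimic the strategy used in the proof of Theorem \ref{theorem:4/5} and derive the bound as a short linear combination of inequalities (3) and (4) of Theorem \ref{theorem:MainResultsCubics}. Since we now wish to bound $\nu_3(G)$ from below without any reference to $\nu_2(G)$, the natural idea is to eliminate $\nu_2(G)$ from the two-inequality system: (3) provides a lower bound on $\nu_2(G)+\nu_3(G)$, while (4), when rearranged as $\tfrac{|V(G)|}{4}+\tfrac{\nu_3(G)}{2}-\nu_2(G)\geq 0$, furnishes an upper bound on $\nu_2(G)$ in terms of $\nu_3(G)$. The term $\nu_2(G)$ appears with opposite signs and matching magnitude in these two rewritten forms, so adding them with equal coefficients $1$ and $1$ will cancel $\nu_2(G)$ entirely.

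Concretely, I would add inequality (3) of Theorem \ref{theorem:MainResultsCubics} to the rearranged inequality (4). The left-hand sides sum to $\tfrac{3}{2}\nu_3(G)$ while the right-hand sides sum to $2|V(G)|-\tfrac{|V(G)|}{4}=\tfrac{7}{4}|V(G)|$. Dividing by $\tfrac{3}{2}$ immediately yields
\[
\nu_3(G)\geq \frac{7}{6}\cdot |V(G)|,
\]
as required. Equivalently, one may substitute the upper bound $\nu_2(G)\leq \tfrac{|V(G)|+2\nu_3(G)}{4}$ from (4) directly into (3) to obtain $\nu_3(G)\geq 2|V(G)|-\tfrac{|V(G)|}{4}-\tfrac{\nu_3(G)}{2}$, from which the bound follows by collecting the $\nu_3(G)$ terms on the left.

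There is no substantive obstacle in this proof: both ingredients are already established in Theorem \ref{theorem:MainResultsCubics}, and no further structural property of cubic graphs (such as existence of perfect matchings, $2$-factors, or the resistance $r_3(G)$) needs to be invoked. The only content is choosing the right coefficients, which is forced by the requirement to eliminate $\nu_2(G)$. This confirms the claim made in the excerpt that (3) and (4) together imply both (1) and (2) of Theorem \ref{theorem:MainResultsCubics}.
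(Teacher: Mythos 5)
Your proposal is correct and is essentially identical to the paper's own proof: both eliminate $\nu_2(G)$ by substituting the upper bound from inequality (4) into inequality (3) and then solving for $\nu_3(G)$. The arithmetic checks out, so nothing further is needed.
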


\begin{proof} Due to (3) of Theorem \ref{theorem:MainResultsCubics}, we have
	\[ \nu_2(G) + \nu_3(G) \geq 2\cdot |V(G)|. \] 
	(4) of Theorem \ref{theorem:MainResultsCubics} states: 
	\[\nu_2(G) \leq \frac{|V(G)| + 2\cdot \nu_3(G)}{4}.\]
	So, we have:
	\[\frac{|V(G)| + 2\cdot \nu_3(G)}{4} + \nu_3(G) \geq 2\cdot |V(G)|,\]
	or
	\[ |V(G)| + 2\cdot \nu_3(G) + 4\cdot \nu_3(G) \geq 8\cdot |V(G)|, \]
	hence,
	\[\nu_3(G) \geq \frac{7}{6}\cdot |V(G)|.\]
	
	The proof of Theorem \ref{theorem:7/6} is complete.
\end{proof}

\medskip

The following graph on $6$ veritices is a tight example for the inequality in Theorem \ref{theorem:7/6} (Figure ~\ref{fig:tightExample76}).
\begin{figure}[ht]
	\begin{center}
	                        
		\begin{tikzpicture}[line cap=round,line join=round,>=triangle 45,x=1.0cm,y=1.0cm]
			\clip(-3.,-1.) rectangle (3.,1.);
			\draw [line width=1.6pt] (-2.,0.6)-- (-2.,-0.6);
			\draw [line width=1.6pt] (-2.,0.6)-- (-2.,-0.6);
			\draw [line width=1.6pt] (-2.,-0.6)-- (-1.,0.);
			\draw [line width=1.6pt] (-2.,0.6)-- (-1.,0.);
			\draw [line width=1.6pt] (-1.,0.)-- (1.,0.);
			\draw [line width=1.6pt] (1.,0.)-- (2.,0.6);
			\draw [line width=1.6pt] (2.,-0.6)-- (1.,0.);
			\draw [line width=1.6pt] (2.,0.6)-- (2.,-0.6);
			\draw [shift={(-1.69,0.)},line width=1.6pt]  plot[domain=2.0476881944604317:4.2354971127191545,variable=\t]({1.*0.6753517601961218*cos(\t r)+0.*0.6753517601961218*sin(\t r)},{0.*0.6753517601961218*cos(\t r)+1.*0.6753517601961218*sin(\t r)});
			\draw [shift={(1.69,0.)},line width=1.6pt]  plot[domain=-1.0939044591293614:1.0939044591293614,variable=\t]({1.*0.6753517601961218*cos(\t r)+0.*0.6753517601961218*sin(\t r)},{0.*0.6753517601961218*cos(\t r)+1.*0.6753517601961218*sin(\t r)});
			\begin{scriptsize}
				\draw [fill=black] (-2.,-0.6) circle (3pt);
				\draw [fill=black] (-2.,0.6) circle (3pt);
				\draw [fill=black] (-1.,0.) circle (3pt);
				\draw [fill=black] (1.,0.) circle (3pt);
				\draw [fill=black] (2.,0.6) circle (3pt);
				\draw [fill=black] (2.,-0.6) circle (3pt);
			\end{scriptsize}
		\end{tikzpicture}
	\end{center}
	\caption{An example attaining the bound of Theorem \ref{theorem:7/6}.}
	\label{fig:tightExample76}
\end{figure}
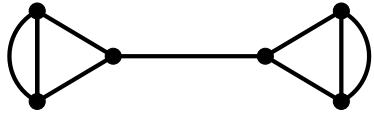

\medskip

\noindent
Inequality (4) of Theorem \ref{theorem:MainResultsCubics} provides an upper bound for $\nu_2(G)$ in terms of $\frac{|V(G)| + 2\cdot \nu_3(G)}{4}$. Here we address the problem of finding a lower bound for $\nu_2(G)$ in terms of the same expression. We investigate this problem in the class of cubic graphs, the class of cubic graphs containing a perfect matching and the class of bridgeless cubic graphs.

Our first result states:

\begin{theorem}\label{theorem:16/17} For any cubic graph $G$
	\[ \nu_2(G) \geq \frac{16}{17}\cdot \frac{|V(G)| + 2\cdot \nu_3(G)}{4}. \]
\end{theorem}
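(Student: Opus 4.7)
The plan is to obtain the desired lower bound as a non-negative convex combination of two inequalities already available in the paper: the bound $\nu_2(G) \geq \frac{4}{5}\cdot |V(G)|$ from Theorem \ref{theorem:4/5}, and the bound $\nu_2(G) \geq \frac{2}{3}\cdot \nu_3(G)$ from Proposition \ref{theorem:2/3}. The target is equivalent to
\[
\nu_2(G) \;\geq\; \tfrac{4}{17}\cdot |V(G)| \;+\; \tfrac{8}{17}\cdot \nu_3(G),
\]
which is a linear function of $|V(G)|$ and $\nu_3(G)$. The two bounds on hand, rewritten in the form $\nu_2(G) \geq a\cdot |V(G)| + b\cdot \nu_3(G)$, correspond to the coefficient vectors $(a,b)=(\tfrac{4}{5},0)$ and $(a,b)=(0,\tfrac{2}{3})$, so reaching the target reduces to a tiny two-variable linear system.

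Concretely, I would search for non-negative multipliers $\lambda_1,\lambda_2$ with $\lambda_1+\lambda_2=1$ satisfying $\lambda_1\cdot \tfrac{4}{5}=\tfrac{4}{17}$ and $\lambda_2\cdot \tfrac{2}{3}=\tfrac{8}{17}$. These equations force $\lambda_1=\tfrac{5}{17}$ and $\lambda_2=\tfrac{12}{17}$, and the check $\tfrac{5}{17}+\tfrac{12}{17}=1$ confirms that the combination is a valid convex combination of two lower bounds on $\nu_2(G)$. Multiplying Theorem \ref{theorem:4/5} by $\tfrac{5}{17}$, Proposition \ref{theorem:2/3} by $\tfrac{12}{17}$, and adding the two inequalities produces the desired bound on the nose.

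There is no real obstacle in this argument: once the linear-combination viewpoint is adopted, the proof collapses to a one-line computation. In this respect Theorem \ref{theorem:16/17} has the same flavour as Theorem \ref{theorem:4/5} and Theorem \ref{theorem:7/6}, both of which were already derived in the preceding section as non-negative linear combinations of the basic inequalities in Theorem \ref{theorem:MainResultsCubics} and Proposition \ref{theorem:2/3}. The only non-routine point worth remarking on is that the weights $\tfrac{5}{17}$ and $\tfrac{12}{17}$ sum to exactly $1$, which is what makes the particular constant $\tfrac{16}{17}$ (rather than something smaller) fall out of the combination.
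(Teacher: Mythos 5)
Your proof is correct and is essentially the paper's own argument: the paper likewise obtains the bound as $17\,\nu_2(G) \geq 4\,|V(G)| + 12\,\nu_2(G) \geq 4\,|V(G)| + 8\,\nu_3(G)$, i.e.\ the convex combination of $\nu_2(G)\geq \tfrac45 |V(G)|$ with weight $\tfrac{5}{17}$ and $\nu_2(G)\geq \tfrac23 \nu_3(G)$ with weight $\tfrac{12}{17}$. The only cosmetic difference is that the paper phrases the first ingredient via inequality (3) of Theorem~\ref{theorem:MainResultsCubics}, of which Theorem~\ref{theorem:4/5} is itself the immediate consequence, so the two derivations coincide.
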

\begin{proof}
The claim follows immediately by a linear combination of 
inequality (3) of Theorem \ref{theorem:MainResultsCubics} and Proposition \ref{theorem:2/3}: add the former with coefficient $\frac{5}{17}$ to the 
latter with coefficient $\frac{12}{17}$.

\end{proof}
\noindent
The Sylvester graph on $10$ vertices is a tight example for this inequality (Figure \ref{fig:tightexample5678}).

\begin{figure}[ht]
	\begin{center}
	\tikzstyle{every node}=[circle, draw, fill=black!50,
                        inner sep=0pt, minimum width=4pt]
                        
		\begin{tikzpicture}
			thick,main node/.style={circle,fill=blue!20,draw,font=\sffamily\Large\bfseries}]
																								
			\node[circle,fill=black,draw] at (-5.5,-1) (n1) {};
																								
			\node[circle,fill=black,draw] at (-6, -0.5) (n2) {};
																								
			\node[circle,fill=black,draw] at (-5,-0.5) (n3) {};
																								
			\node[circle,fill=black,draw] at (-3.5,-1) (n4) {};
																								
			\node[circle,fill=black,draw] at (-4, -0.5) (n5) {};
																								
			\node[circle,fill=black,draw] at (-3,-0.5) (n6) {};
																								
			\node[circle,fill=black,draw] at (-1.5,-1) (n7) {};
																								
			\node[circle,fill=black,draw] at (-2, -0.5) (n8) {};
																								
			\node[circle,fill=black,draw] at (-1,-0.5) (n9) {};
																								
			\node[circle,fill=black,draw] at (-3.5,-2) (n10) {};
																								
			\path[every node]
			(n1) edge  (n2)
																								    
			edge  (n3)
			edge (n10)
																								   	
			(n2) edge (n3)
			edge [bend left] (n3)
																								       
			(n3) 
			(n4) edge (n5)
			edge (n6)
			edge (n10)
																								    
			(n5) edge (n6)
			edge [bend left] (n6)
			(n6)
																								   
			(n7) edge (n8)
			edge (n9)
			edge (n10)
																								    
			(n8) edge (n9)
			edge [bend left] (n9)
																								  
			;
		\end{tikzpicture}
																
	\end{center}
								
	\caption{An example attaining the bound of Theorem \ref{theorem:16/17}.}
	\label{fig:tightexample5678}
\end{figure}
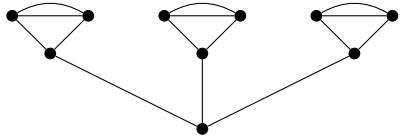
\noindent
\medskip
For cubic graphs containing a perfect matching, we are able to improve the proved lower bound. The proof of this result requires the following auxiliary lemma.

\begin{lemma}\label{lemma:5/6} For any cubic graph $G$ containing a perfect  matching
	\[ \nu_2(G) \geq \frac{5}{6} \cdot n.\]
\end{lemma}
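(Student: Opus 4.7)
The plan is to exploit the perfect matching directly: since $G$ is cubic, removing a perfect matching $M$ yields a $2$-factor $F = G - M$, i.e., a disjoint union of cycles that cover all $n$ vertices. I will take $M$ as one color class of a $2$-edge-colorable subgraph, and construct the second color class by extracting a near-perfect matching inside each cycle of $F$.

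First, fix a perfect matching $M$ of $G$ and write the cycles of the $2$-factor $F = G-M$ as $C_1, \dots, C_k$ with lengths $\ell_1, \dots, \ell_k$. In the cycle $C_i$, a maximum matching has size $\lfloor \ell_i / 2 \rfloor$. Taking the union $M'$ of such matchings, one for each cycle, gives a matching of $G$ edge-disjoint from $M$, so $M \cup M'$ is a $2$-edge-colorable subgraph of $G$. Its size equals
\[
\frac{n}{2} + \sum_{i=1}^{k} \left\lfloor \frac{\ell_i}{2} \right\rfloor
= \frac{n}{2} + \frac{1}{2}\bigl(n - o\bigr) = n - \frac{o}{2},
\]
where $o$ denotes the number of odd cycles among $C_1, \dots, C_k$ (using $\sum \ell_i = n$).

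The final step is to bound $o$ from above. Since each odd cycle has length at least $3$, the $o$ odd cycles together cover at least $3o$ vertices, and hence $3o \leq n$, i.e. $o \leq n/3$. Substituting in, we get
\[
\nu_2(G) \geq |M \cup M'| = n - \frac{o}{2} \geq n - \frac{n}{6} = \frac{5}{6}\, n,
\]
as desired.

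There is no real obstacle here; the only conceptual point is the identification $G - M = $ $2$-factor, after which the inequality reduces to the elementary fact that a cubic graph on $n$ vertices cannot contain more than $n/3$ vertex-disjoint odd cycles. In principle one could try to sharpen this by choosing the perfect matching $M$ so as to minimize the number of odd cycles in the complementary $2$-factor (i.e. $o(G)$), but the crude bound $o \le n/3$ already suffices for $5n/6$.
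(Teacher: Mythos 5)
Your proof is correct and follows essentially the same route as the paper: take the perfect matching $M$ as one color class, a maximum matching of the complementary $2$-factor as the other, and bound the number of odd cycles by $n/3$. The only difference is presentational (you write out the per-cycle computation $\sum_i \lfloor \ell_i/2\rfloor = (n-o)/2$ explicitly), so there is nothing to add.
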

\begin{proof} Let $F$ be a perfect matching of $G$, and let $o(\bar{F})$ be the number of odd cycles in the $2$-factor $G-F$. A $2$-edge-colorable subgraph of $G$ can be obtained by taking $F$ and a maximum matching in $G-F$. Hence, we have
\[\nu_2(G) \geq \frac{n}{2}+\frac{n-o(\bar{F})}{2}.\]
Since the length of each odd cycle of $G-\bar{F}$ is at least $3$, we have
	\[ o(\bar{F}) \leq \frac{n}{3}.\]
	Hence,
	\[ \nu_2(G) \geq \frac{n}{2} + \frac{n}{3}= \frac{5}{6} \cdot n.\]
	
	The proof of Lemma \ref{lemma:5/6} is complete.
\end{proof}

We are ready to prove the main theorem for the class of cubic graphs containing a perfect matching.

\begin{theorem}\label{theorem:20/21} For any cubic graph $G$ containing a perfect  matching
	\[ \nu_2(G) \geq \frac{20}{21}\cdot \frac{|V(G)| + 2\cdot \nu_3(G)}{4}. \]
\end{theorem}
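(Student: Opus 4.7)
The plan is to mimic the linear-combination argument used in Theorems \ref{theorem:4/5} and \ref{theorem:16/17}, but now with Lemma \ref{lemma:5/6} replacing the weaker global bound that was used for arbitrary cubic graphs. So I would combine Lemma \ref{lemma:5/6} (which requires the perfect matching hypothesis) and Proposition \ref{theorem:2/3} with appropriately chosen nonnegative weights.

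Concretely, I would first rewrite the target as $\frac{20}{21}\cdot\frac{|V(G)|+2\nu_3(G)}{4}=\frac{5}{21}|V(G)|+\frac{10}{21}\nu_3(G)$, and then look for coefficients $a,b\ge 0$ with $a+b=1$ (not strictly necessary, but it makes the bookkeeping transparent) such that
\[
a\cdot\tfrac{5}{6}|V(G)|+b\cdot\tfrac{2}{3}\nu_3(G)=\tfrac{5}{21}|V(G)|+\tfrac{10}{21}\nu_3(G).
\]
Matching the coefficient of $|V(G)|$ gives $a=\frac{2}{7}$, matching the coefficient of $\nu_3(G)$ gives $b=\frac{5}{7}$, and these are consistent since $\frac{2}{7}+\frac{5}{7}=1$.

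I would then assemble the proof in one line: multiply the inequality of Lemma \ref{lemma:5/6} by $\frac{2}{7}$, multiply the inequality of Proposition \ref{theorem:2/3} by $\frac{5}{7}$, and add. The sum of the left-hand sides is $\nu_2(G)$, and the sum of the right-hand sides equals $\frac{20}{21}\cdot\frac{|V(G)|+2\nu_3(G)}{4}$, as computed above.

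There is essentially no obstacle beyond the algebra of choosing the weights; the real content sits entirely inside Lemma \ref{lemma:5/6} (where the perfect matching hypothesis is used to get the $\frac{5}{6}n$ bound via a perfect matching plus a maximum matching of the complementary $2$-factor). If one wanted to sharpen $\frac{20}{21}$ further within this framework, one would need to strengthen either Lemma \ref{lemma:5/6} or Proposition \ref{theorem:2/3}; the improvement to $\frac{44}{45}$ promised in the abstract for bridgeless cubic graphs presumably comes from replacing Lemma \ref{lemma:5/6} with a better bound that exploits bridgelessness (for instance a bound on $o(\bar{F})$ smaller than $n/3$ for a suitably chosen perfect matching $F$).
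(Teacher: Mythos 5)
Your proposal is correct and is essentially identical to the paper's own proof: the authors likewise take the linear combination of Lemma \ref{lemma:5/6} with weight $\frac{6}{21}=\frac{2}{7}$ and Proposition \ref{theorem:2/3} with weight $\frac{15}{21}=\frac{5}{7}$. Your weight computation and the observation that all the real content lies in Lemma \ref{lemma:5/6} both match the paper.
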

\begin{proof}
The claim follows immediately by a linear combination of 
Lemma \ref{lemma:5/6} and Proposition \ref{theorem:2/3}: add the former with coefficient $\frac{6}{21}$ to the 
latter with coefficient $\frac{15}{21}$.
					
\end{proof}
\noindent
The graph from Figure \ref{fig:tightExample20/21} attains the bound of Theorem \ref{theorem:20/21}.

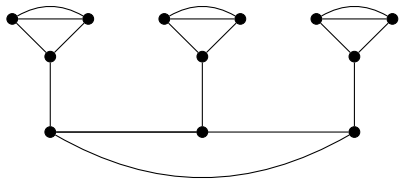
\begin{figure}[ht]
	\begin{center}
	\tikzstyle{every node}=[circle, draw, fill=black!50,
                        inner sep=0pt, minimum width=4pt]
                        
		\begin{tikzpicture}
																								
			\node[circle,fill=black,draw] at (-5.5,-1) (n1) {};
																								
			\node[circle,fill=black,draw] at (-6, -0.5) (n2) {};
																								
			\node[circle,fill=black,draw] at (-5,-0.5) (n3) {};
																								
			\node[circle,fill=black,draw] at (-3.5,-1) (n4) {};
																								
			\node[circle,fill=black,draw] at (-4, -0.5) (n5) {};
																								
			\node[circle,fill=black,draw] at (-3,-0.5) (n6) {};
																								
			\node[circle,fill=black,draw] at (-1.5,-1) (n7) {};
																								
			\node[circle,fill=black,draw] at (-2, -0.5) (n8) {};
																								
			\node[circle,fill=black,draw] at (-1,-0.5) (n9) {};
																								
			\node[circle,fill=black,draw] at (-5.5,-2) (n10) {};
																								
			\node[circle,fill=black,draw] at (-3.5,-2) (n11) {};
																								 
			\node[circle,fill=black,draw] at (-1.5,-2) (n12) {};
																								
			\path[every node]
			(n1) edge  (n2)
																								    
			edge  (n3)
			edge (n10)
																								   	
			(n2) edge (n3)
			edge [bend left] (n3)
																								       
			(n3) 
			(n4) edge (n5)
			edge (n6)
			edge (n11)
																								    
			(n5) edge (n6)
			edge [bend left] (n6)
			(n6)
																								   
			(n7) edge (n8)
			edge (n9)
			edge (n12)
																								    
			(n8) edge (n9)
			edge [bend left] (n9)
																								   
			(n10) edge (n11)
			edge (n12)
																								   
			(n10) edge [bend right] (n12)
																								  
			;
		\end{tikzpicture}
																
	\end{center}
	
	\caption{An example attaining the bound of Theorem \ref{theorem:20/21}.}\label{fig:tightExample20/21}
\end{figure}

\medskip

Petersen theorem states that any bridgeless cubic graph contains a perfect matching \cite{west:1996}. Hence, one can claim that
\[ \nu_2(G) \geq \frac{20}{21}\cdot \frac{|V(G)| + 2\cdot \nu_3(G)}{4} \]
for this class of graphs. It turns out that no bridgeless cubic graph can attain this bound. In other words, we are able to improve the coefficient $\frac{20}{21}$ in this class. 

Our proof will require the following proposition, which is easy to see to be true. It implicitly makes use of the fact, that there is no a bridgeless cubic graph $G$ with $r_3(G)=1$ \cite{steffen:1998,steffen:2004}.

\begin{proposition} \label{prop:r3(G)} Let $G$ be a bridgeless cubic graph. 
\begin{enumerate}
\item [(1)] If $r_3(G)\leq 2$, then \[ \nu_2(G)=\frac{|V(G)| + 2\cdot \nu_3(G)}{4}. \]
\item [(2)] If $r_3(G)$ is odd, then \[ \nu_2(G)<\frac{|V(G)| + 2\cdot \nu_3(G)}{4}. \]
\end{enumerate}
\end{proposition}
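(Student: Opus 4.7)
The plan is to reformulate the target using $|E(G)| = \frac{3n}{2}$ and $\nu_3(G) = |E(G)| - r_3(G)$, giving $\frac{|V(G)| + 2\nu_3(G)}{4} = n - \frac{r_3(G)}{2}$. Under this reformulation, inequality (4) of Theorem \ref{theorem:MainResultsCubics} becomes the clean statement $\nu_2(G) \leq n - \frac{r_3(G)}{2}$, and both parts reduce to comparing the integer $\nu_2(G)$ with this (possibly half-integer) expression.

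For part (1), the cited fact that no bridgeless cubic graph has $r_3(G) = 1$ leaves only $r_3(G) \in \{0, 2\}$. If $r_3(G) = 0$, any proper $3$-edge-coloring of $G$ has total size $\frac{3n}{2}$, so its smallest color class has at most $\frac{n}{2}$ edges; deleting that class yields a $2$-edge-colorable subgraph of size at least $n$, matching the trivial upper bound $\nu_2(G) \leq n$. If $r_3(G) = 2$, the same averaging applied to a maximum $3$-edge-colorable subgraph $(M_1, M_2, M_3)$ of total size $\frac{3n}{2} - 2$ gives a color class of size at most $\lfloor \frac{3n-4}{6} \rfloor = \frac{n}{2} - 1$ (here evenness of $n$, a free consequence of cubicity, is essential), so the union of the other two classes has at least $n-1$ edges, matching the upper bound $\nu_2(G) \leq n - 1$ coming from inequality (4).

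Part (2) is then essentially a parity observation: when $r_3(G)$ is odd, the number $n - \frac{r_3(G)}{2}$ is a half-integer, so the integer-valued $\nu_2(G)$ cannot achieve equality in inequality (4), and the inequality must be strict.

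I expect the only subtle point to be the integrality step in the case $r_3(G) = 2$, where one must refine the averaging bound $\min_i |M_i| \leq \frac{3n-4}{6}$ down to $\frac{n}{2}-1$; this refinement is free, since $n$ is forced to be even whenever a cubic graph exists, and it is exactly what makes the two-sided inequality collapse to equality.
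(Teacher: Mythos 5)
Your proof is correct, and the paper itself omits the argument entirely (it only remarks that the proposition "is easy to see to be true" and hints that it uses the nonexistence of bridgeless cubic graphs with $r_3(G)=1$). Your write-up supplies exactly the intended details: the reformulation $\frac{|V(G)|+2\nu_3(G)}{4}=n-\frac{r_3(G)}{2}$ via $\nu_3(G)=\frac{3n}{2}-r_3(G)$, the averaging/integrality argument giving the matching lower bounds $\nu_2(G)\geq n$ and $\nu_2(G)\geq n-1$ in the cases $r_3(G)=0$ and $r_3(G)=2$, and the parity observation for part (2).
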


Our main result states:

\begin{theorem}\label{thm4445} For any bridgeless cubic graph $G$
	\[ \nu_2(G) \geq \frac{44}{45}\cdot \frac{|V(G)| + 2\cdot \nu_3(G)}{4}. \]
\end{theorem}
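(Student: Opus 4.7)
The plan is to establish the inequality by case analysis on the resistance $r_3(G)$, combining Proposition \ref{prop:r3(G)} with the linear-combination strategy used in Theorems \ref{theorem:16/17} and \ref{theorem:20/21}.

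If $r_3(G) \leq 2$, Proposition \ref{prop:r3(G)}(1) gives $\nu_2(G) = \frac{|V(G)| + 2\nu_3(G)}{4}$ and the inequality is immediate with coefficient $1$. Since no bridgeless cubic graph satisfies $r_3(G) = 1$, this covers $r_3(G) \in \{0, 2\}$ and in particular handles the Petersen graph, which attains the upper bound of Theorem \ref{theorem:MainResultsCubics}(4).

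If $r_3(G) \geq 3$, write $y = r_3(G)$ and $n = |V(G)|$. Proposition \ref{theorem:2/3} gives $\nu_2(G) \geq n - \tfrac{2y}{3}$, which implies the desired bound whenever $n \geq 8y$; Lemma \ref{lemma:5/6} gives $\nu_2(G) \geq \tfrac{5n}{6}$, which implies the bound whenever $y \geq \tfrac{13n}{44}$. Together these cover the two extreme ranges of the ratio $y/n$, leaving only a bounded middle window $\tfrac{n}{8} < y < \tfrac{13n}{44}$. For odd $y$ in this window, Proposition \ref{prop:r3(G)}(2) supplies the complementary upper bound $\nu_2(G) \leq n - \tfrac{y+1}{2}$, which combined with Proposition \ref{theorem:2/3} pins $\nu_2(G)$ down exactly for small odd $y$ (namely $y \in \{3,5,7\}$) and lowers the required size threshold from $n \geq 8y$ to $n \geq \tfrac{y+45}{2}$.

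The main obstacle is closing the remaining small-order configurations: even $y$ in the middle window with $n$ below $8y$, and small odd $y$ with $n$ still below $\tfrac{y+45}{2}$. To eliminate these I would use the structural bound $r_3(G) \leq o(G)$ (from the standard 2-factor construction that 3-edge-colors after removing one edge per odd cycle), the parity $o(G) \equiv 0 \pmod{2}$ which forces $o(G) \geq 4$ whenever $r_3(G) \geq 3$, and the known bound $r_3(G) \leq \tfrac{3|V(G)|}{10}$ for bridgeless cubic graphs (equivalent to a covering by five perfect matchings), combined with the classification of small snarks. Once these finite checks are in place, Propositions \ref{theorem:2/3} and \ref{prop:r3(G)} and Lemma \ref{lemma:5/6} assemble to give the $\tfrac{44}{45}$ coefficient.
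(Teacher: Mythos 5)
Your extreme-range analysis is correct as far as it goes: writing $y=r_3(G)$, Proposition \ref{theorem:2/3} gives the claim when $n\geq 8y$ and Lemma \ref{lemma:5/6} gives it when $y\geq \frac{13n}{44}$, and the $r_3(G)\leq 2$ case is handled by Proposition \ref{prop:r3(G)}. But the middle window $\frac{n}{8}<y<\frac{13n}{44}$ is a genuine gap that none of the tools you list can close. The window is not a ``bounded'' or ``small-order'' phenomenon: for every even $y\geq 4$ it contains graphs of arbitrarily large order in the sense that the constraint is a ratio condition (e.g.\ a hypothetical bridgeless cubic graph with $n=200$ and $r_3=40$ lies squarely inside it, has even resistance so Proposition \ref{prop:r3(G)}(2) is silent, and satisfies $o(G)\geq 4$ and $r_3\leq \frac{3n}{10}$ without contradiction, yet neither of your two lower bounds reaches the required $\frac{44n-22y}{45}$). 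Worse, your own ``pinning down'' step for $y\in\{3,5,7\}$ shows that $\nu_2(G)=n-\frac{y+1}{2}$ would \emph{violate} the theorem for all $n$ in the window below the threshold $\frac{y+45}{2}$; the only escape is to prove such graphs do not exist, and the small-snark classification of \cite{cavi:1998} only reaches $n\leq 26$. Finally, the bound $r_3(G)\leq\frac{3n}{10}$ you invoke is not an available theorem in the form you state (a cover by five perfect matchings is the Berge--Fulkerson conjecture), and even granted it would not help, since $\frac{3}{10}>\frac{13}{44}$.

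What is missing is exactly the one nontrivial imported ingredient of the paper's proof: Steffen's theorem that $\nu_2(G)\geq\frac{11}{12}\cdot n$ for every bridgeless cubic graph with $n\geq 12$ (equivalently, via Lemma \ref{lemma:eckhard}, $r_3(G)\leq\frac{n}{8}$ for $n\geq 16$). This single bound shows the middle window is empty for all but finitely many orders, after which the theorem follows from the two-line linear combination $\frac{12}{45}\cdot\left(\nu_2\geq\frac{11}{12}n\right)+\frac{33}{45}\cdot\left(\nu_2\geq\frac{2}{3}\nu_3\right)$, with $n\leq 10$ dispatched by $r_3(G)\leq 2$ and Proposition \ref{prop:r3(G)}(1). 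Without Steffen's bound (or something of comparable strength), your case analysis cannot be completed.
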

\begin{proof} If $n\leq 10$, then it is known that $r_3(G)\leq 2$. Hence, (1) of Proposition \ref{prop:r3(G)} implies that $G$ satisfies the statement of the theorem. Thus without loss of generality, we can assume that $n\geq 12$.

	Steffen in \cite{steffen:2004} proved that $ \nu_2(G) \geq \frac{11}{12}\cdot n $ when $n \geq 12$. Then the claim follows immediately by a linear combination of 
this inequality and Proposition \ref{theorem:2/3}: add the former with coefficient $\frac{12}{45}$ to the 
latter with coefficient $\frac{33}{45}$.

\end{proof}

We are not able to exhibit a bridgeless cubic graph attaining this bound. Moreover, we suspect that

\begin{conjecture}\label{thm5253} For any bridgeless cubic graph $G$
	\[ \nu_2(G) \geq \frac{52}{53}\cdot \frac{|V(G)| + 2\cdot \nu_3(G)}{4}. \]
\end{conjecture}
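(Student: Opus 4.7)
The plan is to mirror the proof of Theorem \ref{thm4445} by combining Proposition \ref{theorem:2/3} with a strengthened vertex-count lower bound on $\nu_2(G)$. A short expansion shows
\[
\tfrac{52}{53}\cdot\tfrac{|V(G)|+2\cdot\nu_3(G)}{4}\;=\;\tfrac{13}{53}\,|V(G)|+\tfrac{26}{53}\,\nu_3(G),
\]
which is precisely the convex combination $\tfrac{14}{53}\bigl(\tfrac{13}{14}|V(G)|\bigr)+\tfrac{39}{53}\bigl(\tfrac{2}{3}\nu_3(G)\bigr)$. Since the bound $\nu_2(G)\geq\tfrac{2}{3}\nu_3(G)$ is Proposition \ref{theorem:2/3}, Conjecture \ref{thm5253} reduces to establishing
\[
(\star)\qquad \nu_2(G)\;\geq\;\tfrac{13}{14}\cdot|V(G)|
\]
for every bridgeless cubic graph $G$, with at most finitely many small exceptions. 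Those exceptions are easily disposed of via Proposition \ref{prop:r3(G)}: if $r_3(G)\leq 2$ then equality holds in (4) of Theorem \ref{theorem:MainResultsCubics} and the conjecture follows immediately, so one may restrict attention to $G$ with $r_3(G)\geq 3$.

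Next, I would attack $(\star)$ along the same structural route used for Lemma \ref{lemma:5/6} and Steffen's $\tfrac{11}{12}$ bound. By Petersen's theorem, $G$ has a perfect matching $F$; its complement $G-F$ is a $2$-factor, and writing $o(F)$ for the number of odd cycles in $G-F$, the union of $F$ with a maximum matching of $G-F$ yields a $2$-edge-colorable subgraph of size at least $n-o(F)/2$. Thus $(\star)$ would follow from finding some $F$ with $o(F)\leq n/7$. The trivial estimate $o(F)\leq n/3$ yields only the $\tfrac{5}{6}$ constant of Lemma \ref{lemma:5/6}; Steffen's $\tfrac{11}{12}$ bound refines this to $o(F)\leq n/6$ by swapping $F$ along suitable matching edges to eliminate triangle components of $G-F$. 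My plan is to extend this swap procedure to simultaneously control both $3$-cycles and $5$-cycles of $G-F$: starting from a perfect matching $F_0$ minimizing the number of short odd cycles in $G-F_0$, each remaining short cycle $C$ — unless it is structurally ``rigid'' — should be removable or mergeable with a neighboring cycle via an alternating path, yielding a new perfect matching with strictly fewer short odd cycles and contradicting minimality.

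The hard part will be pinning down exactly when this swap fails. Tightness of Steffen's $\tfrac{11}{12}$ bound is achieved on graphs assembled by gluing small bridgeless cubic blocks across non-trivial $2$-edge cuts, and such cuts obstruct alternating-path swaps even in the bridgeless setting. A plausible workaround is induction on the number of non-trivial $2$-edge cuts of $G$, reducing to the cyclically $4$-edge-connected case, where swap operations enjoy more room to succeed; the base case might then be attacked using known partial results towards the Berge--Fulkerson or Fan--Raspaud conjectures, which supply several perfect matchings whose union covers a large fraction of $E(G)$. A further subtlety is that no tight example for Conjecture \ref{thm5253} is currently known; locating or ruling out such an example would be an essential preliminary step, both to verify that $\tfrac{52}{53}$ is the correct coefficient and to detect any additional hypotheses (for instance, cyclic edge-connectivity) that may need to be built into $(\star)$.
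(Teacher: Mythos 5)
This statement is posed in the paper as an open conjecture, not a theorem: the authors explicitly say they cannot prove it, verify it only for bridgeless cubic graphs with $|V(G)|\leq 26$ (via the result of Cavicchioli et al.\ that such non-$3$-edge-colorable graphs have $r_3(G)\leq 2$, whence Proposition \ref{prop:r3(G)} gives equality), and exhibit a $28$-vertex graph showing $\tfrac{52}{53}$ would be best possible. So there is no proof in the paper to match yours against, and your submission is likewise not a proof but a research programme. Your one rigorous contribution is the convex-combination reduction: the arithmetic $\tfrac{52}{53}\cdot\tfrac{n+2\nu_3(G)}{4}=\tfrac{14}{53}\bigl(\tfrac{13}{14}n\bigr)+\tfrac{39}{53}\bigl(\tfrac{2}{3}\nu_3(G)\bigr)$ is correct, so $(\star)$ $\nu_2(G)\geq\tfrac{13}{14}|V(G)|$ together with Proposition \ref{theorem:2/3} would indeed imply the conjecture, exactly as the paper's proofs of Theorems \ref{theorem:16/17}, \ref{theorem:20/21} and \ref{thm4445} combine a vertex-count bound with Proposition \ref{theorem:2/3}.

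The gap is that $(\star)$ is nowhere established, and your own plan for it is self-undermining. First, the "finitely many small exceptions" step is a non sequitur: Proposition \ref{prop:r3(G)} lets you assume $r_3(G)\geq 3$, but that is a condition on resistance, not on $|V(G)|$, and it does not confine potential failures of $(\star)$ to finitely many graphs. Second, you assert that Steffen's $\tfrac{11}{12}$ bound is attained by an infinite family built by gluing blocks across $2$-edge cuts; if so, $(\star)$ with constant $\tfrac{13}{14}>\tfrac{11}{12}$ is false on that whole family, and the decoupling of the conjecture into a pure $|V(G)|$ bound plus Proposition \ref{theorem:2/3} cannot work there --- one would have to track how $\nu_3$ degrades together with $\nu_2$ on such graphs, which is precisely the content of the conjecture. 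Third, the core technical step (a perfect matching $F$ with $o(F)\leq n/7$ via alternating-path swaps eliminating $3$- and $5$-cycles, an induction over non-trivial $2$-edge cuts, and a base case resting on partial results towards Berge--Fulkerson or Fan--Raspaud) is only sketched, with the failure analysis of the swap explicitly left open. As it stands the proposal reduces one open problem to another at least as hard, and quite possibly false, statement.
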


Using the results of \cite{cavi:1998}, we can show that this conjecture holds for any bridgeless cubic graph with $|V(G)|\leq 26$. In \cite{cavi:1998} it is shown that any connected non-$3$-edge-colorable bridgeless cubic graph $G$ contains a vertex $w$ such that $G-w$ is Hamiltonian. One can easily see that this implies that $r_3(G)\leq 2$. Hence, $\nu_2(G)=\frac{|V(G)| + 2\cdot \nu_3(G)}{4}$ due to (1) of Proposition \ref{prop:r3(G)}.

The coefficient $\frac{52}{53}$ is best-possible in the above conjecture, as the graph from Figure \ref{tightexample6789} attains it. The graph has $28$ vertices and it is constructed as follows: we take $3$ vertex disjoint copies of Petersen graph without a vertex (see left of Figure \ref{tightexample6789}) and connect them according to the right of Figure \ref{tightexample6789}.

\begin{figure}[ht]
	\begin{center}
	\tikzstyle{every node}=[circle, draw, fill=black!50,
                        inner sep=0pt, minimum width=4pt]
                        
		\begin{tikzpicture}
																								
			\node[circle,fill=black,draw] at (0,0) (n1) {};
																								
			\node[circle,fill=black,draw] at (-2.5, 0) (n2) {};
																								
			\node[circle,fill=black,draw] at (2.5,0) (n3) {};
																								
			\node[circle,fill=black,draw] at (-1.5,-0.75) (n4) {};
																								
			\node[circle,fill=black,draw] at (1.5, -0.75) (n5) {};
																								
			\node[circle,fill=black,draw] at (-1,-2) (n6) {};
																								
			\node[circle,fill=black,draw] at (1,-2) (n7) {};
																								
			\node[circle,fill=black,draw] at (-2, -3) (n8) {};
																								
			\node[circle,fill=black,draw] at (2,-3) (n9) {};
																								
			\node[circle,fill=black,draw] at (5,-1) (n10) {};
																								
			\node[circle,fill=black,draw] at (5,-0.25) (n11) {};
																								 
			\node[circle,fill=black,draw] at (4.1,-1.75) (n12) {};
																								  
			\node[circle,fill=black,draw] at (6.9,-1.75) (n13) {};
																								  
			\node[circle,fill=black,draw] at (4.5,0.25) (n14) {};
			\node[circle,fill=black,draw] at (5.5,0.25) (n15) {};
																								  
			\node[circle,fill=black,draw] at (3.4,-2) (n16) {};
			\node[circle,fill=black,draw] at (4.6,-2) (n17) {};
																								  
			\node[circle,fill=black,draw] at (6.3,-2) (n18) {};
			\node[circle,fill=black,draw] at (7.6,-2) (n19) {};

			\path[every node]
			(n1) edge  (n6)
																								    
			edge  (n7)
																								   	
			(n2) edge (n4)
			edge (n8)
																								       
			(n3) edge (n5)
			edge (n9)
			(n4) edge (n5)
			edge (n7)

			(n5) edge (n6)
																								   
			(n6)edge (n8)
																								   
			(n7) edge (n9)

			(n8) edge (n9)
																								   
			(n10) edge (n11)
			edge (n12)
			edge (n13)
																								   
			(n14) edge (n16)
			(n15) edge (n19)
			(n17) edge (n18)

			;
																								
																								
			\draw (5, 0) ellipse (1cm and 0.5cm);
																								
			\draw (4, -2) ellipse (1cm and 0.5cm);
																								
			\draw (7, -2) ellipse (1cm and 0.5cm);

		\end{tikzpicture}
																
	\end{center}
	\caption{An example attaining the bound of Conjecture \ref{thm5253}.}\label{tightexample6789}
\end{figure}
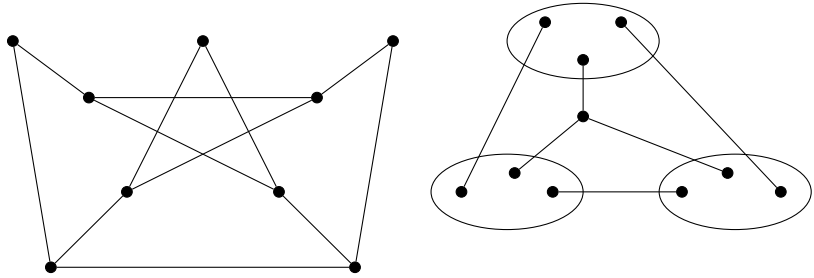

Note that all these coefficients $ \frac{44}{45}$, $\frac{52}{53}$ are very close to $1$, and there are also a vast number of graphs for which mentioned coefficient is $1$, i.e. $ \nu_2(G) = \frac{|V(G)| + 2\cdot \nu_3(G)}{4} $. So, it is an interesting problem to characterize the class of bridgeless cubic graphs $G$ with $ \nu_2(G) = \frac{|V(G)| + 2\cdot \nu_3(G)}{4} $. 

We suspect that

\begin{conjecture}\label{thmnphard} It is {\bf NP-hard} to test whether a given bridgeless cubic graph $G$ satisfies $ \nu_2(G) = \frac{|V(G)| + 2\cdot \nu_3(G)}{4} $.
\end{conjecture}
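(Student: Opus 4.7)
The plan is to establish NP-hardness by a polynomial-time many-one reduction from a known NP-hard problem. The natural candidate is \textsc{Edge-3-Colorability} of bridgeless cubic graphs, which is NP-hard (Holyer's theorem together with its refinements ruling out bridges and preserving cubicity). Given a bridgeless cubic instance $H$, I would build, in polynomial time, a bridgeless cubic graph $G$ such that
\[
	\nu_2(G) = \frac{|V(G)| + 2\cdot \nu_3(G)}{4}
\]
if and only if $H$ is $3$-edge-colorable.

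The first step is to sharpen Proposition \ref{prop:r3(G)} into a cleaner structural description of when equality holds. Since $\nu_2(G)\le |V(G)| - r_3(G)/2$, the gap on the right is a half-integer exactly when $r_3(G)$ is odd (ruling out equality) and an integer when $r_3(G)$ is even. Part (1) of the proposition already covers $r_3(G)\le 2$, so the delicate range is $r_3(G)\ge 4$ even. By analyzing how a maximum triple of edge-disjoint matchings can be rearranged into a pair achieving $\nu_2$, I would aim for an iff-criterion of the form: equality holds precisely when some maximum $3$-edge-colorable subgraph of $G$ admits a specified pairing/exchange structure on its uncolored edges. The next step is to design a "resistance gadget" $\Gamma$, a small bridgeless cubic fragment with two attachment edges, whose insertion into $H$ (for instance via edge-subdivision and identification) increases $r_3$ by a prescribed amount and of prescribed parity. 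Natural candidates are Petersen-style snark fragments, since the Petersen graph itself has $r_3 = 3$, together with the half-Petersen pieces already used in Figure \ref{tightexample6789}. By attaching an appropriate number of copies of $\Gamma$, one arranges that $r_3(G)$ is even precisely when $H$ is $3$-edge-colorable, so that the parity obstruction in part (2) of Proposition \ref{prop:r3(G)} immediately rules out equality in the no-case.

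The remaining (and subtler) task is to verify the yes-direction: when $H$ is $3$-edge-colorable, $G$ must actually satisfy the equality. The cleanest scenario is to engineer the construction so that $r_3(G)\le 2$ in the yes-case, letting part (1) of Proposition \ref{prop:r3(G)} close the argument; failing that, one would invoke the sharpened criterion from the first step to exhibit explicit maximum $2$- and $3$-edge-colorable subgraphs of $G$ witnessing the equality. The main obstacle will be the gadget design: $\Gamma$ must simultaneously (a) preserve bridgelessness and cubicity of $G$, (b) rigidly control the parity, and ideally the exact value, of $r_3(G)$ independently of how local $3$-edge-colorings of $H$ interact with its attachment edges, and (c) avoid the even-$r_3$ loophole in which equality could accidentally hold in the no-case. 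Because equality is a fragile property depending on the fine interaction between maximum $2$-edge-colorable and maximum $3$-edge-colorable subgraphs, and not merely on $r_3(G)$ alone, proving correctness of the gadget is likely to be the technical heart of the argument, requiring a case analysis of how $\Gamma$'s attachment edges can appear in any optimal $2$- and $3$-edge-coloring of $G$.
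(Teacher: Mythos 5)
The statement you are trying to prove is presented in the paper as Conjecture~\ref{thmnphard}; the authors explicitly write ``we suspect that'' and give no proof, so there is no argument of theirs to compare yours against. Judged on its own, your text is a research programme rather than a proof: every load-bearing component is deferred. You ``would aim for'' an iff-criterion refining Proposition~\ref{prop:r3(G)} that characterizes when $\nu_2(G)=\frac{|V(G)|+2\nu_3(G)}{4}$ holds, you ``would'' design a gadget $\Gamma$ with prescribed effect on $r_3$, and you yourself flag the correctness of $\Gamma$ as ``the technical heart of the argument'' still to be done. Since neither the criterion nor the gadget is exhibited, neither direction of the claimed equivalence is established.

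Beyond being incomplete, the one concrete mechanism you do commit to is unsound as stated. You propose to attach copies of $\Gamma$ so that ``$r_3(G)$ is even precisely when $H$ is $3$-edge-colorable,'' and to kill the no-case via part (2) of Proposition~\ref{prop:r3(G)}. But a gadget that shifts $r_3$ by a fixed amount of fixed parity can only translate the parity of $r_3(H)$, and the NP-hardness of \textsc{Edge-$3$-Colorability} gives you no control over the value, or even the parity, of $r_3(H)$ on no-instances: a non-$3$-edge-colorable bridgeless cubic $H$ may have $r_3(H)$ equal to $2$, $3$, $4$, or larger, in either parity. So after the shift, no-instances can still land on even $r_3(G)\geq 4$, where Proposition~\ref{prop:r3(G)} is silent and, as you acknowledge, equality is not determined by $r_3(G)$ alone. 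Closing this requires exactly the missing iff-criterion (or a gadget whose analysis replaces it), which is to say the proof has not been given. As it stands, the conjecture remains open both in the paper and in your proposal.
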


\section{Inequalities and bounds for claw-free cubic graphs}

This section deals with lower bounds of $\nu_2(G)$ and $\nu_3(G)$ in the class of claw-free cubic graphs. We show that there exist substantial improvements for most of the inequalities proved in the previous section. On the other hand, we demonstrate that some of them cannot be improved.

Before we formulate the new inequalities let us give some definitions.

\begin{definition} A subgraph of $G$ is called a \textbf{diamond} if it is isomorphic to $K_4-e$.
\end{definition}

\begin{definition}
	In graph $G$ a \textbf{string of diamonds} is a maximal sequence $D_1, D_2, . . ., D_k$ of diamonds in which, for every $i\in \{1, 2, . . ., k-1\}$, $D_i$ has a vertex adjacent to a vertex in $D_{i+1}$.
\end{definition}

\begin{definition}
	A connected claw-free cubic graph in which every vertex is in a diamond is called a \textbf{ring of diamonds}.
\end{definition}

\begin{definition}
	\textbf{Replacing a vertex $v$ with
	a triangle} in a cubic graph is to replace $v$ with three vertices $v_1, v_2, v_3$ forming a triangle
	so that if $e_1, e_2, e_3$ are three edges incident with $v$, then $e_1, e_2, e_3$ will be incident with
	$v_1, v_2, v_3$ respectively.
\end{definition} If a graph $G$ is obtained from the graph $H$ by replacing all vertices of $H$ with a triangle, then we will write \textbf{$G = H_\bigtriangleup$}.

We are ready to state the characterization of simple claw-free bridgeless cubic graphs proved by Sang-il Oum in \cite{sang-il_oum:2011}.

\begin{theorem}[\cite{sang-il_oum:2011}]\label{sangIlOum}
	A graph $G$ is a simple $2$-edge-connected claw-free cubic if and only if either
	\begin{itemize}
		\item[(i)] $G$ is isomorphic to $K_4$, or
		\item[(ii)] $G$ is a ring of diamonds, or
		\item[(iii)] $G$ can be built from a 2-edge-connected cubic graph $H$ by replacing some edges of $H$ with strings of diamonds and replacing each vertex of $H$ with a triangle.
	\end{itemize}
	
\end{theorem}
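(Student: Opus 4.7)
The plan is to prove both directions separately, relying on a case analysis of triangles together with a contraction argument for the nontrivial direction. For the reverse direction I would verify each of the three constructions directly: $K_4$ is manifestly simple, cubic, $2$-edge-connected and claw-free (any triple of its vertices induces a triangle). In a ring of diamonds every vertex lies in a $K_4-e$, so two of its neighbours are adjacent and no claw can be induced; cubicity and $2$-edge-connectivity are immediate from the definition. For construction~(iii), replacing a vertex of $H$ by a triangle and an edge of $H$ by a string of diamonds places every new vertex inside a triangle or a diamond, ruling out induced $K_{1,3}$'s, while cubicity and $2$-edge-connectivity lift from $H$ because each inserted gadget has exactly two ``ports'' and is internally $2$-edge-connected.

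For the forward direction I would first record the key observation that in a claw-free cubic graph every vertex lies on a triangle: if the three neighbours $x,y,z$ of $v$ were pairwise non-adjacent, then $\{v,x,y,z\}$ would induce $K_{1,3}$, a contradiction. Next, for a triangle $T=\{a,b,c\}$ with external neighbours $a',b',c'$, I would analyse the multiset $\{a',b',c'\}$: (a) if all three are distinct, call $T$ \emph{clean}; (b) if exactly two coincide, say $a'=b'=d$, then $\{a,b,d\}$ is also a triangle and $\{a,b,c,d\}$ induces a diamond with endpoints $c,d$; (c) if all three coincide then $\{a,b,c,a'\}$ induces $K_4$, which by connectivity forces $G=K_4$.

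Assuming $G\neq K_4$, I would iterate the case analysis on the external edges of each diamond: each of the two endpoints of a diamond has a single external edge to a vertex lying in some triangle, and that neighbouring triangle is either clean or shares an edge with a further diamond. Repeating this yields maximal sequences $D_1,D_2,\dots,D_k$ of diamonds connected by single edges, i.e.\ strings of diamonds. If such a chain closes into a cycle, then by connectedness every vertex of $G$ belongs to some diamond and $G$ is a ring of diamonds, giving case~(ii); otherwise every maximal string terminates at clean triangles on both sides. Contracting each clean triangle to a single vertex and each maximal string of diamonds (together with its two external edges) to a single edge produces a multigraph $H$. Then $H$ is cubic because each clean triangle contributes exactly three external edges, and $H$ is $2$-edge-connected because a bridge in $H$ would pull back to an edge of $G$ whose removal disconnects $G$. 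Reversing the contraction recovers $G$ from $H$ via construction~(iii).

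The main obstacle I expect is the bookkeeping in the iterated case analysis: one must verify that two distinct maximal strings never share a vertex, that the three external edges of a clean triangle point into three distinct structures (which amounts to ruling out cases (b) and (c) for a clean triangle and uses the simplicity of $G$), and that no edge of $G$ is left unaccounted for after the contraction. Handling the degenerate subcases in which a string returns to the same clean triangle (producing a loop in $H$) or two strings join the same pair of clean triangles (producing a multi-edge in $H$) also requires a short check that these are consistent with $H$ remaining a $2$-edge-connected cubic multigraph.
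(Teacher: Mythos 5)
The paper offers no proof of Theorem~\ref{sangIlOum}: it is imported verbatim from \cite{sang-il_oum:2011}, so there is nothing in-paper to compare against; your outline is in substance the argument of that cited source (every vertex of a claw-free cubic graph lies in a triangle, a triangle is either clean or extends to a diamond or to $K_4$, diamonds concatenate into maximal strings, and contracting clean triangles to vertices and strings to edges yields the base graph $H$), and the reverse direction is indeed routine verification. Two small corrections to your closing paragraph: the ``loop'' subcase, where a maximal string returns to the same clean triangle, is not something to be made ``consistent with $H$ remaining a $2$-edge-connected cubic multigraph'' --- it must be shown impossible, which it is, because the third external edge of that triangle would then be a bridge of $G$; and you should state explicitly that $H$ is allowed to carry multi-edges (the paper's conventions permit this), since two strings or two direct edges joining the same pair of clean triangles genuinely occur and produce parallel edges in $H$.
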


Let us also recall the following classical result of Sumner:

\begin{proposition}\label{sumner}(\cite{sumner:1974})
	If $G$ is a connected claw-free graph of even order, then $G$ has a perfect matching.
\end{proposition}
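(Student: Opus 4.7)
The plan is to proceed by contradiction: assume $G$ is a connected claw-free graph of even order with no perfect matching. Let $M$ be a maximum matching of $G$ and let $U$ denote the set of $M$-unsaturated vertices. Since $|V(G)|$ is even and $M$ is not perfect, $|U| \geq 2$. Using the connectedness of $G$, I would choose $u, v \in U$ minimizing the graph distance between them, and let $P = u x_1 x_2 \ldots x_{\ell-1} v$ be a shortest $u$-$v$ path. Note that $\ell \geq 2$, since $\ell = 1$ would mean $uv \in E(G)$ and $M \cup \{uv\}$ would strictly enlarge $M$. Moreover, every interior vertex of $P$ is $M$-saturated: if $x_i$ with $0 < i < \ell$ were unsaturated, then $(u, x_i)$ would be a strictly closer pair in $U \times U$, contradicting our choice.

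Next I would focus on $x_1$. Let $y$ be the $M$-partner of $x_1$, so $x_1 y \in M$. In the non-degenerate case $y \notin \{u, x_2\}$, the vertex $x_1$ has the three distinct neighbors $u, x_2, y$, so claw-freeness at $x_1$ forces some pair in $\{u, x_2, y\}$ to be adjacent. The shortest-path choice rules out $u \sim x_2$, so either $u \sim y$ or $x_2 \sim y$. In the first case, $M' = (M \setminus \{x_1 y\}) \cup \{u y\}$ is a maximum matching in which $x_1$ is unsaturated; then the pair $(x_1, v)$ of unsaturated vertices has distance $\ell - 1$, contradicting the minimality of $\ell$. In the second case ($x_2 \sim y$, with $x_2 y \notin M$ since $x_1 y \in M$), the walk $u\, x_1\, y\, x_2 \ldots v$ is $M$-alternating from an unsaturated vertex, and iterating the same claw-based analysis at subsequent interior vertices yields either an $M$-augmenting path (contradicting maximality of $M$ via Berge's theorem) or, after a swap, a strictly shorter path between two unsaturated vertices.

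The main obstacle will be the degenerate case $y = x_2$, meaning $x_1 x_2 \in M$, where the claw argument at $x_1$ fails because only two distinct neighbors occur among $\{u, x_2, y\}$. In that situation I would shift attention to $x_2$, whose $M$-partner is $x_1$, and apply the claw argument to the triple $\{x_1, x_3, z\}$ where $z$ is a third neighbor of $x_2$ off the path $P$; if $x_2$ has degree two, then $P$ is locally a pendant-like segment that can be rerouted directly. Iterating along $P$ in this way, claw-freeness at each stage either produces an augmenting swap or decreases the distance between some pair of unsaturated vertices. Since $\ell$ is a positive integer, the iteration must terminate, yielding a contradiction with either the maximality of $M$ or the minimality of $\ell$. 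This establishes the existence of a perfect matching.
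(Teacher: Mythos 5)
The paper does not actually prove this proposition; it cites it directly from Sumner's 1974 paper, so there is no in-paper argument to compare yours against. Your strategy --- fix a maximum matching $M$, take two exposed vertices $u,v$ at minimum distance, and apply claw-freeness at the first interior vertex $x_1$ of a shortest path $P$ --- is a standard and workable route to Sumner's theorem, and the preliminary observations ($\ell\ge 2$, interior vertices saturated, $u\not\sim x_2$) are correct. However, as written the argument has two genuine gaps. First, your extremal choice fixes $M$ and only then minimizes the distance between exposed vertices; after you pass to a new maximum matching $M'$, the pair $(x_1,v)$ is exposed with respect to $M'$, not $M$, so it does not contradict the minimality of $\ell$ as you defined it. You must minimize $d(u,v)$ jointly over all maximum matchings and all exposed pairs (or set up an explicit descent on a positive integer that cannot decrease forever).

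Second, the cases $x_2\sim y$ (for $\ell\ge 3$) and $y=x_2$ are only gestured at: you assert that ``iterating the same claw-based analysis'' yields an augmenting path or a closer exposed pair, but you never show the iteration makes progress, and the walk $u\,x_1\,y\,x_2\cdots v$ is longer than $P$ and need not be $M$-alternating beyond $x_2$, since the $M$-partner of $x_2$ need not be $x_3$. In fact both cases close in one step, with no iteration, by rotating the matching along a length-two alternating path starting at $u$. If $y=x_2$, replace $x_1x_2\in M$ by $ux_1$; the resulting maximum matching exposes $x_2$ and $v$, which are at distance $\ell-2$. If $y\ne x_2$, $x_2\sim y$ and $u\not\sim y$, replace $x_1y$ by $ux_1$; the resulting maximum matching exposes $y$ and $v$, and $d(y,v)\le 1+(\ell-2)=\ell-1$. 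With the corrected extremal setup, each rotation contradicts the minimality of $\ell$ (or, if the new distance equals $1$, the maximality of the matching). So your skeleton is right, but the extremal bookkeeping and the two hard cases --- which are the substance of the proof --- are missing as written.
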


We are ready to improve the lower bound for $\nu_2(G)$ in the class of claw-free cubic graphs.

\begin{theorem}
	For any claw-free cubic graph $G$ 
	\[ \nu_2(G) \geq \frac{5}{6}\cdot |V(G)|. \]
\end{theorem}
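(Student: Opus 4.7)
The plan is to reduce the claim directly to Lemma \ref{lemma:5/6}, which already gives the bound $\nu_2(G) \geq \tfrac{5}{6}\cdot n$ for any cubic graph containing a perfect matching. So the only thing to verify is that every claw-free cubic graph admits a perfect matching; after that, nothing new needs to be proved.

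To do this, I would first observe that each connected component $C$ of a cubic graph has an even number of vertices, since $3\cdot|V(C)| = 2\cdot|E(C)|$ must be even. Every such component is itself a connected claw-free graph (claw-freeness is hereditary on induced subgraphs) of even order, so by Sumner's theorem (Proposition \ref{sumner}) it has a perfect matching. Taking the union of these component-wise perfect matchings yields a perfect matching of $G$.

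Once a perfect matching is in hand, the bound follows immediately by applying Lemma \ref{lemma:5/6} to $G$. There is essentially no obstacle here: the work has been front-loaded into Sumner's theorem and into Lemma \ref{lemma:5/6}, and the role of the claw-free hypothesis is exactly to guarantee, via Sumner, that the hypothesis of Lemma \ref{lemma:5/6} is satisfied automatically (in contrast to the earlier Theorem \ref{theorem:20/21}, where the existence of a perfect matching had to be assumed). The only minor subtlety to be careful about is to handle possibly disconnected $G$, which is dealt with by applying Sumner component-by-component as above.
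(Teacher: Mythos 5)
Your proposal is correct and follows essentially the same route as the paper: reduce to Sumner's theorem (Proposition \ref{sumner}) to obtain a perfect matching, then invoke Lemma \ref{lemma:5/6}. The only cosmetic difference is that you assemble a global perfect matching from the components and apply the lemma once, whereas the paper applies the argument component by component; your explicit check that each component has even order is a detail the paper leaves implicit.
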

\begin{proof}
	If $G$ is not connected, then by proving the inequality for each connected component we will prove it for $G$. So, we can assume that $G$ is connected. Proposition \ref{sumner} implies that $G$ has a perfect matching. Hence, by Lemma \ref{lemma:5/6}, the above inequality holds.  
\end{proof}

Note that the lower bound on $\nu_3(G)$ for general cubic graphs from Theorem \ref{theorem:7/6} cannot be 
improved for claw-free cubic graphs because the tight example from Figure \ref{fig:tightExample76} is claw-free.

Also, note that the inequality from Theorem \ref{theorem:20/21} cannot be improved for claw-free cubic graphs, as the tight example is a claw-free graph as well. 

\medskip

Below we improve the lower bound for $\nu_3(G)$ in the class of claw-free bridgeless cubic graphs. First, we will state a theorem which holds for every claw-free bridgeless graph, but we will not give a proof of it, as later in the paper we will prove a much better result with a small restriction on the graph size. 
	\begin{theorem}\label{thm2930}
	For any claw-free bridgeless cubic graph $G$
	\[ \nu_2(G) \geq \frac{29}{30}\cdot |V(G)|. \]
\end{theorem}

\begin{theorem}\label{thm4345}
	For any claw-free bridgeless cubic graph $G$ 
	\[ \nu_3(G) \geq \frac{43}{45}\cdot |E(G)|. \]
\end{theorem}
\begin{proof}
Recall that (4) of Theorem \ref{theorem:MainResultsCubics} was stating that $\nu_2(G) \leq \frac{|V(G)| + 2\cdot \nu_3(G)}{4}$. Using this result and Theorem \ref{thm2930} we can easily deduce the statement of this theorem as follows:
\[ \frac{29}{30}|V(G)| \leq \nu_2(G) \leq \frac{|V(G)| + 2\cdot \nu_3(G)}{4}, \]
\[ \frac{29}{30}|V(G)| \leq \frac{|V(G)| + 2\cdot \nu_3(G)}{4}. \]
After some basic calculations we will get
\[ \frac{43}{30} |V(G)| \leq \nu_3(G), \]
which is the same as
\[ \frac{43}{45}|E(G)| \leq \nu_3(G). \]
The proof of the theorem is complete.
\end{proof}

We observe that Theorem \ref{thm4345} is best-possible in a sense that there is a graph attaining the bounds of this theorem. An example of such a graph is $P_\bigtriangleup$, where $P$ is the Petersen graph. 

For the proof of our next result, we will require some lemmas.

\begin{lemma}\label{r3strings} Let $G'$ be a cubic graph, and assume that $G$ is a cubic graph obtained from $G'$ by replacing one of edges of $G'$ with a string of diamonds. Then \[r_3(G)=r_3(G').\]
\end{lemma}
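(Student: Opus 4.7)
The plan is to prove $r_3(G) \leq r_3(G')$ and $r_3(G') \leq r_3(G)$ separately, using a single local coloring fact about a diamond. The key observation is that in any proper $3$-edge-coloring of a diamond $D$ on vertices $\{a,b,c,d\}$ (with $ab$ the missing edge of $K_4$) together with the two external edges $e_a, e_b$ at $a$ and $b$, the colors of $e_a$ and $e_b$ are forced to coincide. Indeed, the vertices $c$ and $d$ each see all three colors among their incident edges inside $D$, so $\{ac, bc, cd\} = \{1,2,3\} = \{ad, bd, cd\}$; this forces $\{ac, ad\} = \{bc, bd\} = \{1,2,3\} \setminus \{cd\}$, and hence the third edge at each of $a$ and $b$ must equal $cd$. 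Iterating this equality along the string of diamonds $D_1, \ldots, D_k$ that replaces the edge $e = uv$ of $G'$ in $G$, in any proper $3$-edge-coloring of the string together with its two end-external edges the two end-externals and every bridge edge $b_i a_{i+1}$ must share a single common color.

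For $r_3(G) \leq r_3(G')$, I would take a minimum $F' \subseteq E(G')$ with $G' - F'$ properly $3$-edge-colored and construct $F \subseteq E(G)$ with $|F| = |F'|$. If $e \notin F'$, I keep the given coloring on $E(G) \cap E(G')$ and color every bridge edge and both end-externals of the string with the color of $e$, then fill in each diamond using the observation above; here $F = F'$. If $e \in F'$, I set $F = (F' \setminus \{e\}) \cup \{ua_1\}$; with $ua_1$ removed, a short check shows the diamond $D_1$ admits a proper $3$-edge-coloring in which its surviving external edge $b_1 a_2$ receives any prescribed color, and the propagation argument above then lets me set the end-external at $v$ to the unique color absent at $v$ in the coloring of $G' - F'$.

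For $r_3(G') \leq r_3(G)$, I would take a minimum $F \subseteq E(G)$ with $G - F$ properly $3$-edge-colored. If $F$ is disjoint from the string, the propagation observation provides a common color $c$ on all end-externals and bridge edges, so setting $F' := F$ (viewed as a subset of $E(G')$) and coloring $e$ by $c$ yields a proper $3$-edge-coloring of $G' - F'$. If $F$ meets the string in at least one edge, I set $F' := (F \cap E(G')) \cup \{e\}$, so $|F'| \leq |F|$, and restrict the given coloring to $E(G') \setminus (F' \cup \{e\})$; since $e$ is absent from $G' - F'$, no new conflict can appear at $u$ or $v$, and $G' - F'$ is properly $3$-edge-colored.

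The main obstacle is the case $e \in F'$ of the first inequality: I must verify that removing the single edge $ua_1$ of the string in $G$ genuinely decouples the color requirements at the two ends $u$ and $v$, so that whatever color is missing at $v$ in the coloring of $G' - F'$ can always be realized by some proper $3$-edge-coloring of the rest of the string. This reduces to showing that $D_1$ with the external edge $ua_1$ deleted admits a proper $3$-edge-coloring for every prescribed color on $b_1 a_2$, which is a small finite case check. With this in place, the two inequalities combine to give $r_3(G) = r_3(G')$.
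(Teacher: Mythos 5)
Your proposal is correct and follows essentially the same route as the paper: the central ingredient in both is the forced propagation of a single color along the bridge edges and end-externals of the string (the paper's forced coloring of Figure \ref{stringOfDiamonds6} is exactly your diamond observation), combined with a case split on whether the removed/uncolored edges meet the string. You merely phrase everything in terms of removal sets $F$ rather than maximum $3$-edge-colorable subgraphs (equivalent via $r_3(G)=|E(G)|-\nu_3(G)$), and you spell out the direction $r_3(G)\leq r_3(G')$ that the paper dismisses as easy, including the small but necessary check that deleting $u a_1$ decouples the colors at the two ends.
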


\begin{proof} Assume that the string of diamonds of $G$ that has replaced the edge $a$ of $G'$ contains exactly $k$ diamonds. Then we have
\[|E(G)| = |E(G')|+6k. \]
Taking into account that
\[r_3(G)=|E(G)|-\nu_3(G)\textrm{ and } r_3(G')=|E(G')|-\nu_3(G'),\]
it suffices to show that 
\[\nu_3(G) = \nu_3(G')+6k. \]
It is easy to prove that
\[\nu_3(G) \geq \nu_3(G')+6k, \]
hence, we will only show that
\[\nu_3(G) \leq \nu_3(G')+6k. \]

Let $(H, H', H'')$ be a triple of edge disjoint matchings of $G$, such that their union forms a maximum $3$-edge-colorable subgraph of $G$. Observe that the string itself contains $6k+1$ edges of $G$. Now, if at least one of these edges of $G$ does not belong to $H\cup H'\cup H''$, then the restrictions of these matchings to $G'$ ($H\cap E(G')$, $H'\cap E(G')$, $H''\cap E(G')$) will form a $3$-edge-colorable subgraph of $G'$ (Figure \ref{stringOfDiamonds8}), hence,
\[\nu_3(G')\geq |(H\cup H'\cup H'')\cap E(G')|\geq \nu_3(G)-6k,\]
or 
\[\nu_3(G) \leq \nu_3(G')+6k. \]
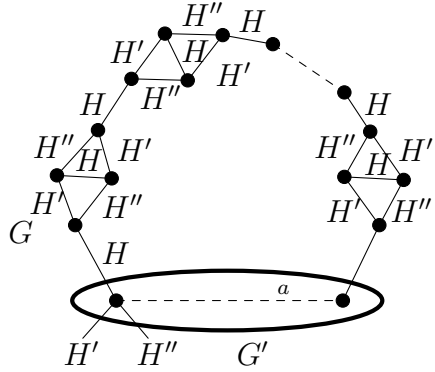
\begin{figure}
		      	      	\begin{center}
		      	      		\begin{tikzpicture}[line cap=round,line join=round,>=triangle 45,x=1.0cm,y=1.0cm]
		      	      			\clip(-3.,-1.) rectangle (3.,4.);
		      	      			\draw [rotate around={0.:(0.,0.)},line width=1.6pt] (0.,0.) ellipse (2.0387377853388053cm and 0.39553983031824425cm);
		      	      			\draw (-1.46,0.)-- (-2.,1.);
		      	      			\draw (-1.46,0.)-- (-1.9,-0.5);
		      	      			\draw (-1.46,0.)-- (-1.04,-0.5);
		      	      			\draw (-2.,1.)-- (-1.52,1.62);
		      	      			\draw (-1.52,1.62)-- (-1.7,2.26);
		      	      			\draw (-1.7,2.26)-- (-2.24,1.66);
		      	      			\draw (-2.24,1.66)-- (-2.,1.);
		      	      			\draw (-2.24,1.66)-- (-1.52,1.62);
		      	      			\draw (-1.7,2.26)-- (-1.26,2.94);
		      	      			\draw (-1.26,2.94)-- (-0.52,2.92);
		      	      			\draw (-0.52,2.92)-- (-0.06,3.52);
		      	      			\draw (-0.06,3.52)-- (-0.82,3.54);
		      	      			\draw (-0.82,3.54)-- (-1.26,2.94);
		      	      			\draw (-0.82,3.54)-- (-0.52,2.92);
		      	      			\draw (1.52,0.)-- (2.,1.);
		      	      			\draw (-0.06,3.52)-- (0.6,3.4);
		      	      			\draw (2.,1.)-- (2.32,1.6);
		      	      			\draw (2.32,1.6)-- (1.88,2.24);
		      	      			\draw (1.88,2.24)-- (1.54,1.64);
		      	      			\draw (1.54,1.64)-- (2.,1.);
		      	      			\draw (1.54,1.64)-- (2.32,1.6);
		      	      			\draw (1.54,2.76)-- (1.88,2.24);
		      	      			\draw [dash pattern=on 3pt off 3pt, printersafe] (-1.46,0.)-- (1.52,0.);
		      	      			\draw [dashed, printersafe] (0.6,3.4)-- (1.54,2.76);
		      	      			\draw (-0.02,-0.4) node[anchor=north west] {$G'$};
		      	      			\draw (-3.02,1.2) node[anchor=north west] {$G$};
		      	      			\draw (-1.8,0.92) node[anchor=north west] {$H$};
		      	      			\draw (-1.8,1.55) node[anchor=north west] {$H''$};
		      	      			\draw (-2.15,2.15) node[anchor=north west] {$H$};
		      	      			\draw (-2.7,2.35) node[anchor=north west] {$H''$};
		      	      			\draw (-2.1,2.9) node[anchor=north west] {$H$};
		      	      			\draw (-0.75,3.6) node[anchor=north west] {$H$};
		      	      			\draw (-1.3,3) node[anchor=north west] {$H''$};
		      	      			\draw (-0.75,4.1) node[anchor=north west] {$H''$};
		      	      			\draw (0.,4) node[anchor=north west] {$H$};
		      	      			\draw (1.65,2.9) node[anchor=north west] {$H$};
		      	      			\draw (1,2.35) node[anchor=north west] {$H''$};
		      	      			\draw (1.65,2.1) node[anchor=north west] {$H$};

		      	      			\draw (-1.6,2.3) node[anchor=north west] {$H'$};
		      	      			\draw (-2.75,1.6) node[anchor=north west] {$H'$};
		      	      			\draw (-0.3,3.3) node[anchor=north west] {$H'$};
		      	      			\draw (-1.7,3.6) node[anchor=north west] {$H'$};
		      	      			\draw (2.1,2.3) node[anchor=north west] {$H'$};
		      	      			\draw (1.15,1.5) node[anchor=north west] {$H'$};
		      	      			\draw (2,1.5) node[anchor=north west] {$H''$};
		      	      			\draw (-2.3,-0.4) node[anchor=north west] {$H'$};
		      	      			\draw (-1.34,-0.4) node[anchor=north west] {$H''$};
		      	      					      	      					      	      			
		      	      			\begin{scriptsize}
		      	      				\draw [fill=black] (-1.46,0.) circle (2.5pt);
		      	      				\draw [fill=black] (1.52,0.) circle (2.5pt);
		      	      				\draw [fill=black] (-2.,1.) circle (2.5pt);
		      	      				\draw [fill=black] (-1.52,1.62) circle (2.5pt);
		      	      				\draw [fill=black] (10.5872,0.) circle (2.5pt);
		      	      				\draw [fill=black] (-1.7,2.26) circle (2.5pt);
		      	      				\draw [fill=black] (-2.24,1.66) circle (2.5pt);
		      	      				\draw [fill=black] (-1.26,2.94) circle (2.5pt);
		      	      				\draw [fill=black] (-0.52,2.92) circle (2.5pt);
		      	      				\draw [fill=black] (-0.06,3.52) circle (2.5pt);
		      	      				\draw [fill=black] (-0.82,3.54) circle (2.5pt);
		      	      				\draw [fill=black] (2.,1.) circle (2.5pt);
		      	      				\draw [fill=black] (0.6,3.4) circle (2.5pt);
		      	      				\draw [fill=black] (2.32,1.6) circle (2.5pt);
		      	      				\draw [fill=black] (1.88,2.24) circle (2.5pt);
		      	      				\draw [fill=black] (1.54,1.64) circle (2.5pt);
		      	      				\draw [fill=black] (1.54,2.76) circle (2.5pt);
		      	      				\draw[color=black] (0.74,0.15) node {$a$};
		      	      			\end{scriptsize}
		      	      		\end{tikzpicture}
		      	      	\end{center}
		      	      	\caption{Restrictions of matchings form a $3$-edge-colorable subgraph of $G'$.}
		      	      	\label{stringOfDiamonds8}
		      	      \end{figure}

Thus, without loss of generality, we can assume that all $6k+1$ edges of the string of $G$ belong to $H\cup H'\cup H''$. Assume that the first edge of the string belongs to $H''$ (Figure \ref{stringOfDiamonds6}). Then, one can easily see that the string should be colored as on Figure \ref{stringOfDiamonds6}. This coloring is unique up to flipping of edges of $H$ and $H'$ in the diamonds of the string.

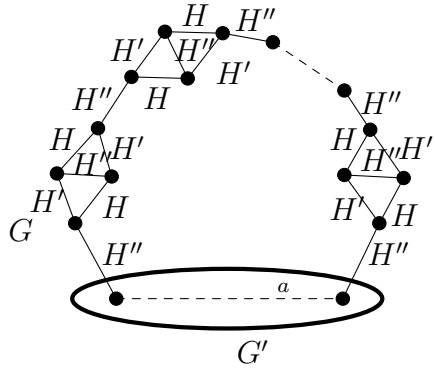
\begin{figure}[ht]
		      	      	\begin{center}
		      	      		\begin{tikzpicture}[line cap=round,line join=round,>=triangle 45,x=1.0cm,y=1.0cm]
		      	      			\clip(-3.,-1.) rectangle (3.,4.);
		      	      			\draw [rotate around={0.:(0.,0.)},line width=1.6pt] (0.,0.) ellipse (2.0387377853388053cm and 0.39553983031824425cm);
		      	      			\draw (-1.46,0.)-- (-2.,1.);
		      	      			\draw (-2.,1.)-- (-1.52,1.62);
		      	      			\draw (-1.52,1.62)-- (-1.7,2.26);
		      	      			\draw (-1.7,2.26)-- (-2.24,1.66);
		      	      			\draw (-2.24,1.66)-- (-2.,1.);
		      	      			\draw (-2.24,1.66)-- (-1.52,1.62);
		      	      			\draw (-1.7,2.26)-- (-1.26,2.94);
		      	      			\draw (-1.26,2.94)-- (-0.52,2.92);
		      	      			\draw (-0.52,2.92)-- (-0.06,3.52);
		      	      			\draw (-0.06,3.52)-- (-0.82,3.54);
		      	      			\draw (-0.82,3.54)-- (-1.26,2.94);
		      	      			\draw (-0.82,3.54)-- (-0.52,2.92);
		      	      			\draw (1.52,0.)-- (2.,1.);
		      	      			\draw (-0.06,3.52)-- (0.6,3.4);
		      	      			\draw (2.,1.)-- (2.32,1.6);
		      	      			\draw (2.32,1.6)-- (1.88,2.24);
		      	      			\draw (1.88,2.24)-- (1.54,1.64);
		      	      			\draw (1.54,1.64)-- (2.,1.);
		      	      			\draw (1.54,1.64)-- (2.32,1.6);
		      	      			\draw (1.54,2.76)-- (1.88,2.24);
		      	      			\draw [dash pattern=on 3pt off 3pt, printersafe] (-1.46,0.)-- (1.52,0.);
		      	      			\draw [dashed, printersafe] (0.6,3.4)-- (1.54,2.76);
		      	      			\draw (-0.02,-0.4) node[anchor=north west] {$G'$};
		      	      			\draw (-3.02,1.2) node[anchor=north west] {$G$};
		      	      			\draw (-1.8,0.9) node[anchor=north west] {$H''$};
		      	      			\draw (-1.8,1.5) node[anchor=north west] {$H$};
		      	      			\draw (-2.2,2.1) node[anchor=north west] {$H''$};
		      	      			\draw (-2.5,2.4) node[anchor=north west] {$H$};
		      	      			\draw (-2.2,2.95) node[anchor=north west] {$H''$};
		      	      			\draw (-1.25,2.95) node[anchor=north west] {$H$};
		      	      			\draw (-0.75,4.05) node[anchor=north west] {$H$};
		      	      			\draw (-0.85,3.6) node[anchor=north west] {$H''$};
		      	      			\draw (-0.05,4) node[anchor=north west] {$H''$};
		      	      			\draw (1.6,2.9) node[anchor=north west] {$H''$};
		      	      			\draw (1.6,2.15) node[anchor=north west] {$H''$};
		      	      			\draw (1.7,0.9) node[anchor=north west] {$H''$};
		      	      			\draw (1.2,2.4) node[anchor=north west] {$H$};
		      	      			\draw (2,1.4) node[anchor=north west] {$H$};
		      	      			\draw (-1.7,2.3) node[anchor=north west] {$H'$};
		      	      			\draw (-2.75,1.6) node[anchor=north west] {$H'$};
		      	      			\draw (-0.3,3.3) node[anchor=north west] {$H'$};
		      	      			\draw (-1.7,3.6) node[anchor=north west] {$H'$};
		      	      			\draw (2.1,2.3) node[anchor=north west] {$H'$};
		      	      			\draw (1.2,1.5) node[anchor=north west] {$H'$};
		      	      			\begin{scriptsize}
		      	      				\draw [fill=black] (-1.46,0.) circle (2.5pt);
		      	      				\draw [fill=black] (1.52,0.) circle (2.5pt);
		      	      				\draw [fill=black] (-2.,1.) circle (2.5pt);
		      	      				\draw [fill=black] (-1.52,1.62) circle (2.5pt);
		      	      				\draw [fill=black] (10.5872,0.) circle (2.5pt);
		      	      				\draw [fill=black] (-1.7,2.26) circle (2.5pt);
		      	      				\draw [fill=black] (-2.24,1.66) circle (2.5pt);
		      	      				\draw [fill=black] (-1.26,2.94) circle (2.5pt);
		      	      				\draw [fill=black] (-0.52,2.92) circle (2.5pt);
		      	      				\draw [fill=black] (-0.06,3.52) circle (2.5pt);
		      	      				\draw [fill=black] (-0.82,3.54) circle (2.5pt);
		      	      				\draw [fill=black] (2.,1.) circle (2.5pt);
		      	      				\draw [fill=black] (0.6,3.4) circle (2.5pt);
		      	      				\draw [fill=black] (2.32,1.6) circle (2.5pt);
		      	      				\draw [fill=black] (1.88,2.24) circle (2.5pt);
		      	      				\draw [fill=black] (1.54,1.64) circle (2.5pt);
		      	      				\draw [fill=black] (1.54,2.76) circle (2.5pt);
		      	      				\draw[color=black] (0.74,0.15) node {$a$};
		      	      			\end{scriptsize}
		      	      		\end{tikzpicture}
		      	      	\end{center}
		      	      	\caption{All edges of the string belong to the matchings.}
		      	      	\label{stringOfDiamonds6}
		      	      \end{figure}

Consider the restrictions of matchings of $H$, $H'$ and $H''$ to $G'$, and add the edge $a$ to $H''$ (Figure \ref{stringOfDiamonds6}). Observe that these new matchings will form a $3$-edge-colorable subgraph of $G'$, hence,
\[\nu_3(G')\geq |(H\cup H'\cup H'')\cap E(G')|= (\nu_3(G)+1)-6k-1=\nu_3(G)-6k,\]
or 
\[\nu_3(G) \leq \nu_3(G')+6k. \]
The proof of the lemma is complete.
\end{proof}

\begin{lemma}\label{r3triangles} (See the proof of Lemma 3.4 from \cite{steffen:2004}) Let $G'$ be a bridgeless cubic graph, and assume that $G$ is a bridgeless cubic graph obtained from $G'$ by replacing one of vertices of $G'$ with a triangle. Then \[r_3(G)=r_3(G').\] 
\end{lemma}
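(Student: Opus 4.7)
The plan is to follow the blueprint of Lemma~\ref{r3strings}. The replacement adds exactly the three triangle edges $v_1v_2, v_1v_3, v_2v_3$ to $G'$, so $|E(G)| = |E(G')| + 3$; hence $r_3(G) = r_3(G')$ is equivalent to
\[ \nu_3(G) = \nu_3(G') + 3, \]
which I would establish as the two opposite inequalities.

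The direction $\nu_3(G) \geq \nu_3(G') + 3$ is immediate: start from a maximum triple of edge-disjoint matchings $(H_1', H_2', H_3')$ in $G'$ and let $c_i \in \{1,2,3,\ast\}$ be the color of $e_i$ at $v$, with $\ast$ meaning uncolored. Since $e_1, e_2, e_3$ all share the vertex $v$ in $G'$, the colored $c_i$'s are automatically pairwise distinct. A short case analysis on the number of colored $e_i$'s (namely $0$, $1$, $2$, or $3$) produces an explicit proper $3$-edge-coloring of the triangle $v_1v_2v_3$ consistent with the $c_i$'s, adding exactly three new colored edges.

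For the reverse inequality $\nu_3(G) \leq \nu_3(G') + 3$, start from a maximum triple $(H_1, H_2, H_3)$ in $G$ and attempt to restrict to $G'$ by dropping the triangle edges. The subtle point is that in $G$ the edges $e_1, e_2, e_3$ are pairwise non-adjacent (they meet distinct triangle vertices) and so may appear in a common matching $H_j$, whereas in $G'$ they all meet $v$ and cannot. The favorable situation is when all three triangle edges lie in $H_1 \cup H_2 \cup H_3$: then the triangle is properly $3$-edge-colored and at each $v_i$ the two triangle edges occupy two colors, forcing the colored $c_i$ to take the unique remaining color; these third colors are distinct across $i$, so no conflict arises at $v$ in $G'$ and the restriction loses exactly the three triangle edges.

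The main obstacle is reducing the general case to this favorable one. The key structural fact is that the gadget formed by the triangle together with the three pendants $e_1, e_2, e_3$ admits, up to permutation of colors, a \emph{unique} optimal triple of matchings of total size six, namely the one assigning $\{v_jv_k, e_i\}$ to color $i$ for $\{i,j,k\}=\{1,2,3\}$; in this pattern every triangle edge is colored and the $c_i$'s are pairwise distinct. If the given maximum triple in $G$ does not realize this pattern locally, one performs a two-color Kempe-chain swap along an alternating path leaving the triangle through some $e_i$, which rearranges the local configuration into the favorable pattern without decreasing the total count. The bridgelessness of $G$ is used to guarantee that such an alternating path can always be closed or continued without introducing new conflicts; the detailed verification is precisely the argument of Lemma~3.4 in \cite{steffen:2004}, which we invoke.
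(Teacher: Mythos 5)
Your reduction of the claim to $\nu_3(G)=\nu_3(G')+3$, your proof of the direction $\nu_3(G)\ge\nu_3(G')+3$, and your treatment of the favorable case of the converse (all three triangle edges colored, whence the pendant edges $e_1,e_2,e_3$ are forced onto pairwise distinct colors and the restriction to $G'$ is proper at $v$) are all correct. Since the paper itself gives no argument for this lemma beyond the pointer to Lemma~3.4 of \cite{steffen:2004}, ending by invoking that lemma is consistent with what the authors do; in that sense your write-up actually supplies more detail than the paper.

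There is, however, a genuine gap in the step that reduces the general case to the favorable one. Consider a maximum triple in $G$ with $v_1v_2$ of color $a$, $v_1v_3$ of color $b$, $v_2v_3$ uncolored, and $e_1,e_2,e_3$ all of color $c$. This configuration is consistent with maximality: the color missing at $v_2$ is $b$, the color missing at $v_3$ is $a$, and the $\{a,b\}$-alternating path joining them is $v_2v_1v_3$, so the alternating odd cycle associated with the uncolored edge is the triangle itself. Here a Kempe swap does not ``rearrange the local configuration into the favorable pattern without decreasing the total count'': any proper coloring of the gadget with all three triangle edges colored forces the three pendants onto distinct colors, which cannot be achieved without recoloring two of $e_1,e_2,e_3$ away from $c$ (a global, not local, operation that may be blocked at their outer endpoints), and the naive restriction to $G'$ loses four edges (two triangle edges plus two of the three $c$-colored pendants, which now clash at $v$) rather than three. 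This is exactly the configuration where the real work of Steffen's lemma happens — one recolors in the contracted graph $G'$, using $\{a,c\}$- and $\{b,c\}$-chains through the pendant edges, to recover one more colored edge at $v$ — so your description of the mechanism is not the one that closes the argument. A smaller point: bridgelessness is not what makes the Kempe-chain step work (that argument is valid in any cubic graph); in this circle of results it is used elsewhere, e.g.\ to rule out $r_3(G)=1$.
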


\begin{lemma}\label{lemma:eckhard}\cite{steffen:2004}
	If $G$ is a bridgeless cubic graph with at least $16$ vertices, then \[r_3(G) \leq \frac{|V(G)|}{8}.\]
\end{lemma}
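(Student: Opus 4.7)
The plan is to bound $r_3(G)$ via the minimum number of odd cycles in a $2$-factor of $G$, and then to bound the latter by $|V(G)|/8$ using a minimality/swap argument. By Petersen's theorem every bridgeless cubic graph has a perfect matching; fix such a matching $M$, and set $F = E(G) \setminus M$, so that $F$ is a $2$-factor, that is, a spanning disjoint union of cycles.

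The first step is the elementary inequality $r_3(G) \leq o(F)$, where $o(F)$ denotes the number of odd cycles of $F$. Indeed, deleting a single edge from each odd cycle of $F$ leaves a subgraph that is a disjoint union of paths and even cycles, hence $2$-edge-colorable; together with $M$ as the third color class, this gives a proper $3$-edge-coloring of $G$ after removing only $o(F)$ edges. Minimizing over $2$-factors, $r_3(G) \leq o(G)$, so it suffices to show $o(G) \leq n/8$ for $n \geq 16$.

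To this end, I would fix a $2$-factor $F$ realizing $o(G)$ and study the forbidden configurations forced by the minimality of $o(F)$. The central structural principle is that odd cycles of $F$ cannot lie too close together in the matching $M$: for any alternating cycle $D$ in $F \cup M$ that meets two odd cycles, the swap $M \mapsto M \triangle D$ produces a new perfect matching whose complementary $2$-factor merges those odd cycles (possibly with some even cycles) into a single cycle of strictly smaller odd-cycle count, contradicting minimality. In particular, two odd cycles of $F$ cannot be joined by a single matching edge, and a careful case analysis similarly rules out various short alternating structures joining pairs of odd cycles.

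The main obstacle will be the concluding counting argument: translating these local forbidden configurations into the global bound $n \geq 8 \cdot o(F)$. The idea is to assign to each odd cycle $C$ of $F$ a ``territory'' of at least $8$ vertices, namely $C$ itself together with a controlled portion of the other $F$-cycles reached by the matching edges incident to $C$, and to argue that distinct odd cycles receive essentially disjoint territories. The tightest case is when all odd cycles are triangles, and this is precisely where the hypothesis $n \geq 16$ enters: it is needed to exclude small exceptional configurations in which the triangles are packed so tightly that the charging cannot afford a full eight vertices per odd cycle.
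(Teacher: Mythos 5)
First, note that the paper does not prove this lemma at all: it is imported verbatim from \cite{steffen:2004}, so there is no internal proof to compare yours against. Your opening reduction is correct --- deleting one edge from each odd cycle of a $2$-factor $F$ and using the complementary perfect matching as a third colour class does give $r_3(G)\leq o(G)$. The problem is everything after that: you have replaced the lemma by the strictly stronger claim $o(G)\leq \frac{|V(G)|}{8}$ for $n\geq 16$, and this claim is false. Since a cubic graph has an even number of vertices, every $2$-factor has an even number of odd components, so $o(G)$ is even and hence $o(G)\geq r_3(G)+1$ whenever $r_3(G)$ is odd. The $28$-vertex graph of Figure \ref{tightexample6789} in this very paper has $r_3(G)\geq 3$ (otherwise, by Proposition \ref{prop:r3(G)}, it would satisfy $\nu_2(G)=\frac{|V(G)|+2\nu_3(G)}{4}$ and could not be tight for the ratio $\frac{52}{53}$), so $o(G)\geq 4>\frac{28}{8}$. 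Thus the quantity you propose to bound by $n/8$ genuinely exceeds $n/8$ for some admissible graphs, and no amount of care in the swap or charging analysis can rescue the route: the slack between $r_3(G)$ and the oddness is essential to the lemma.

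Beyond that fatal reduction, the two ingredients of your sketch for the second step do not hold as stated. Two odd cycles of $F$ joined by a single matching edge give rise to no $M$-alternating cycle at all, so there is nothing to swap there; and a general swap $M\mapsto M\triangle D$ can cut even cycles of $F$ into new odd components, so it need not decrease the odd-cycle count even when it merges two odd cycles. The concluding charge of $8$ private vertices per odd cycle is also unsupported: odd components can be triangles or pentagons (the Petersen graph already shows the natural charge is only $5$ per odd cycle), and the extra vertices would have to be borrowed from neighbouring cycles with no argument for disjointness; you acknowledge this as ``the main obstacle'' but supply no argument. If you want a workable line, follow \cite{steffen:2004} and work directly with a minimum set of removable edges: each uncolored edge of a maximum $3$-edge-colorable subgraph forces an odd two-coloured Kempe cycle through it, these cycles are pairwise disjoint, and the counting is then done on that structure rather than on $2$-factors.
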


\begin{lemma}\label{lemma:|V|/24} Let $G$ be any claw-free bridgeless cubic graph with $n \geq 48$. Then \[r_3(G)\leq \frac{|V(G)|}{24}.\]
\end{lemma}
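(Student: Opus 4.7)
The plan is to use Sang-il Oum's structural characterization of simple $2$-edge-connected claw-free cubic graphs (Theorem \ref{sangIlOum}) to reduce the problem to a much smaller bridgeless cubic graph $H$, on which Lemma \ref{lemma:eckhard} can then be invoked.

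I would first reduce to the case of connected $G$, since both $r_3$ and $|V|$ are additive over the connected components. A connected bridgeless cubic graph is $2$-edge-connected, so Theorem \ref{sangIlOum} applies and yields three subcases. The case $G = K_4$ is excluded by the hypothesis $n \geq 48$. If $G$ is a ring of diamonds, I would exhibit a proper $3$-edge-coloring directly: color each external (inter-diamond) edge and, inside each diamond, the unique edge joining the two degree-$3$-in-diamond vertices, with color $1$; the remaining four edges of each diamond can then be colored with $2$ and $3$ in the obvious way. This forces $r_3(G)=0$ and the inequality becomes trivial.

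In the remaining case, $G$ is obtained from a $2$-edge-connected cubic graph $H$ by replacing some edges of $H$ with strings of diamonds and each vertex of $H$ with a triangle. Applying Lemma \ref{r3strings} once per string of diamonds and Lemma \ref{r3triangles} once per original vertex of $H$ (the intermediate graphs remain bridgeless cubic throughout, as the hypotheses of those lemmas require) yields the key identity $r_3(G)=r_3(H)$. Since each vertex of $H$ produces three vertices of $G$ via the triangle replacement, $|V(H)| \leq n/3$. I would then split on $|V(H)|$. If $|V(H)| \geq 16$, Lemma \ref{lemma:eckhard} gives
\[ r_3(G) \;=\; r_3(H) \;\leq\; \frac{|V(H)|}{8} \;\leq\; \frac{n/3}{8} \;=\; \frac{n}{24}. \]
If $|V(H)| \leq 14$ (recall $|V(H)|$ is even since $H$ is cubic), I would invoke the classical fact that the Petersen graph is the only non-$3$-edge-colorable bridgeless cubic graph of order at most $14$, giving $r_3(H) \leq 2$; combined with $n \geq 48$ this yields $r_3(H) \leq 2 \leq n/24$.

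The main obstacle is the small-$H$ subcase, and it is precisely what forces the hypothesis $n \geq 48$: we need $2 \leq n/24$, so the hypothesis is essentially tight for this approach. A secondary technical point will be handling potential multi-edges, since Theorem \ref{sangIlOum} is stated for simple graphs; an elementary structural argument (using that a double edge in a cubic graph severely restricts the graph's structure) should dispose of that case.
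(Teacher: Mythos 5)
Your overall strategy is the paper's own: reduce to a simple connected graph, invoke Oum's structure theorem (Theorem \ref{sangIlOum}), transfer $r_3$ from $G$ to the underlying graph $H$ via Lemmas \ref{r3strings} and \ref{r3triangles}, and apply Lemma \ref{lemma:eckhard} once $H$ is large enough. The large-$H$ branch, the bound $|V(H)|\leq n/3$, and the explicit $3$-edge-coloring of a ring of diamonds are all fine.

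Two points need repair. First, the ``classical fact'' you invoke for $|V(H)|\leq 14$ is false as stated: the Petersen graph is not the only non-$3$-edge-colorable bridgeless cubic graph on at most $14$ vertices. Replacing one vertex of the Petersen graph by a triangle gives a $12$-vertex bridgeless cubic graph whose resistance, by Lemma \ref{r3triangles}, still equals $2$, so it is not $3$-edge-colorable. What your argument actually needs is only the weaker statement that $r_3(H)\leq 2$ whenever $H$ is bridgeless cubic with $|V(H)|\leq 14$; this is true, and it is exactly what the paper extracts (in the stronger range $|V(H)|\leq 26$) from the result of Cavicchioli et al.\ \cite{cavi:1998} that every connected non-$3$-edge-colorable bridgeless cubic graph on at most $26$ vertices has a vertex whose deletion leaves a Hamiltonian graph, whence $r_3\leq 2$. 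With that substitution your case split closes. Second, the multi-edge case cannot be dismissed in one clause: Theorem \ref{sangIlOum} requires simplicity, and the paper devotes a genuine argument to the reduction --- suppressing each doubled pair, checking that the reduced graph remains claw-free, bridgeless and cubic, that $r_3$ is unchanged, and that the reduced graph is still large enough for the remainder of the proof to apply. You name the right reduction but leave precisely these verifications open. Neither issue changes the architecture of the proof, which coincides with the paper's.
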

\begin{proof} If $r_3(G) \leq 2$, then 
\[r_3(G)\leq 2\leq  \frac{|V(G)|}{24}.\]
Thus, without loss of generality, we can assume that $r_3(G) \geq 3$. If $G$ contains multi-edges, then repeatedly remove the vertices of $G$ adjacent to multi-edges and join the $2$ degree-two vertices with an edge (Figure \ref{multigraph2}). We claim that the resulting graph $G'$ contains no multi-edges and $|V(G')| \geq 84$. 

	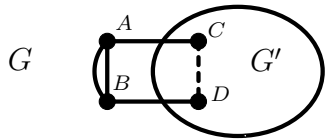
\begin{figure}[ht]
		\begin{center}
			\definecolor{ffffff}{rgb}{1.,1.,1.}
			\begin{tikzpicture}[line cap=round,line join=round,>=triangle 45,x=1.0cm,y=1.0cm]
				\clip(-3.,-1.) rectangle (4.,1.);
				\draw [rotate around={0.:(0.7165454545454546,0.)},line width=1.6pt] (0.7165454545454546,0.) ellipse (1.116643549858145cm and 0.8564201241271993cm);
				\draw [line width=1.6pt] (-1.,0.4)-- (0.2,0.4);
				\draw [line width=1.6pt] (-1.,-0.4)-- (0.2,-0.4);
				\draw [line width=1.6pt] (-1.,0.4)-- (-1.,-0.4);
				\draw [shift={(-0.6069090909090916,0.)},line width=1.6pt]  plot[domain=2.3474831103790965:3.9357021968004897,variable=\t]({1.*0.5608212396208945*cos(\t r)+0.*0.5608212396208945*sin(\t r)},{0.*0.5608212396208945*cos(\t r)+1.*0.5608212396208945*sin(\t r)});
				\draw [line width=1.6pt,dashed, printersafe] (0.2,0.4)-- (0.2,-0.4);
				\draw (0.735549973811124,0.45801048113376025) node[anchor=north west] {$G'$};
				\draw (-1.835549973811124,0.45801048113376025) node[anchor=north east] {$G$};
				\begin{scriptsize}
					\draw [fill=ffffff] (0.8149090909090907,-0.8530909090909052) circle (.5pt);
					\draw[color=ffffff] (0.9658774507820388,-0.6442710157984739) node {$A_2$};
					\draw [fill=black] (0.2,0.4) circle (3pt);
					\draw[color=black] (0.4394146462770906,0.5567222569784379) node {$C$};
					\draw [fill=black] (0.2,-0.4) circle (3pt);
					\draw[color=black] (0.4887705341994295,-0.29877980034210194) node {$D$};
					\draw [fill=black] (-1.,0.4) circle (3pt);
					\draw[color=black] (-0.7780305891406021,0.638982070182336) node {$A$};
					\draw [fill=black] (-1.,-0.4) circle (3pt);
					\draw[color=black] (-0.8109345144221614,-0.15071213657508545) node {$B$};
				\end{scriptsize}
			\end{tikzpicture}
		\end{center}
		\caption{A multi-edge in $G$.}
		\label{multigraph2}
	\end{figure}

If it contains a multi-edge, then one can easily see that $r_3(G)=0$ ($G$ is $3$-edge-colorable), which violates our assumption that $r_3(G) \geq 3$. Hence, $G'$ is simple. Consider the Theorem  \ref{sangIlOum}. As $r_3(G) \geq 3$, we have that the theorem works from point (iii). Let $H$ be the corresponding $2$-edge-connected graph $H$. Lemmas \ref{r3strings} and \ref{r3triangles} imply that $3\leq r_3(G')=r_3(H)$.

Let us show that $|V(H)|\geq 28$. If $|V(H)|\leq 26$, then \cite{cavi:1998} implies that there is a vertex $w$ of $H$ such that $H-w$ is Hamiltonian. One can easily see that this implies that $r_3(H)\leq 2$ contradicting our assumption. Hence, $|V(H)|\geq 28$, which implies that $|V(G')|\geq 3\cdot 28= 84$. 

Thus, without loss of generality, we can assume that our initial graph $G$ is simple.

Similarly, one can show that $G$ contains no string of diamonds. Thus, due to Theorem \ref{sangIlOum}, there is a $2$-edge-connected graph $H$ such that $G=H_\bigtriangleup$. As $|V(G)|\geq 48$, we have $|V(H)|\geq 16$, hence, due to Lemma \ref{lemma:eckhard}, we have
\[r_3(G)=r_3(H)\leq \frac{|V(H)|}{8}=\frac{|V(H_\bigtriangleup)|}{24}= \frac{|V(G)|}{24}.\]
The proof of the lemma is complete.
\end{proof}

\begin{theorem}\label{theorem:3536}
	For any claw-free bridgeless cubic graph $G$ with $n \geq 48$
	\[ \nu_2(G) \geq \frac{35}{36}\cdot |V(G)|. \]
\end{theorem}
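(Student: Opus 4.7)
The plan is to combine Lemma \ref{lemma:|V|/24} (the resistance bound for sufficiently large claw-free bridgeless cubic graphs) with Proposition \ref{theorem:2/3} (the $\frac{2}{3}$ ratio between $\nu_2$ and $\nu_3$). This is a direct linear-combination argument, analogous to how Theorems \ref{theorem:4/5}, \ref{theorem:16/17}, \ref{theorem:20/21}, and \ref{thm4445} were derived in the preceding section.

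First I would translate the resistance bound into a lower bound on $\nu_3(G)$. Since $G$ is cubic, $|E(G)| = \tfrac{3n}{2}$, and by definition $\nu_3(G) = |E(G)| - r_3(G)$. Lemma \ref{lemma:|V|/24} gives $r_3(G) \leq \tfrac{n}{24}$, so
\[
\nu_3(G) \;\geq\; \frac{3n}{2} - \frac{n}{24} \;=\; \frac{35n}{24}.
\]
Next I would apply Proposition \ref{theorem:2/3}, which states $\nu_2(G) \geq \tfrac{2}{3}\,\nu_3(G)$, to obtain
\[
\nu_2(G) \;\geq\; \frac{2}{3} \cdot \frac{35n}{24} \;=\; \frac{35}{36}\cdot n,
\]
which is precisely the claimed inequality.

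There is essentially no obstacle here: all the hard work has already been done in Lemma \ref{lemma:|V|/24}, whose proof peeled off strings of diamonds and vertex-triangles using Lemmas \ref{r3strings} and \ref{r3triangles}, invoked the Sang-il Oum characterization (Theorem \ref{sangIlOum}) together with the $16$-vertex resistance estimate of Steffen (Lemma \ref{lemma:eckhard}), and used the result of \cite{cavi:1998} to rule out small cases. Given that input, Theorem \ref{theorem:3536} is just arithmetic. The only thing worth noting is that the hypothesis $n \geq 48$ is used solely through Lemma \ref{lemma:|V|/24}; for smaller $n$ one would need a separate case analysis, but this is outside the scope of the statement.
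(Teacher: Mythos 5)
Your proposal is correct and coincides with the paper's own proof: both derive $\nu_3(G)\geq \frac{35}{24}\cdot|V(G)|$ from Lemma \ref{lemma:|V|/24} via $\nu_3(G)=|E(G)|-r_3(G)$ and then apply Proposition \ref{theorem:2/3} to conclude $\nu_2(G)\geq \frac{35}{36}\cdot|V(G)|$. Nothing is missing.
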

\begin{proof} Due to Lemma \ref{lemma:|V|/24}
\[ \nu_3(G) =|E(G)|-r_3(G) \geq \frac{3 \cdot |V(G)|}{2} - \frac{|V(G)|}{24} = \frac{35 \cdot |V(G)|}{24}, \]
hence,
\[ \nu_2(G) \geq \frac23 \cdot \nu_3(G) \geq \frac{2}{3} \cdot \frac{35|V(G)|}{24} = 35\cdot\frac{|V(G)|}{36},  \]
when $|V(G)|\geq 48$.

The proof of the theorem is complete.
\end{proof}

\begin{theorem}
	For any claw-free bridgeless cubic graph $G$ with $n \geq 48$ 
	\[ \nu_2(G) \geq \frac{140}{141}\cdot \frac{|V(G)| + 2\cdot \nu_3(G)}{4}. \]
\end{theorem}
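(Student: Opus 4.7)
The plan is to mimic the proofs of Theorems \ref{theorem:16/17}, \ref{theorem:20/21}, and \ref{thm4445}: obtain the desired lower bound as a linear combination of a ``$\nu_2(G) \geq c\cdot |V(G)|$'' type inequality (which is already available with $c=\frac{35}{36}$ from Theorem \ref{theorem:3536}, valid when $n\geq 48$) and Proposition \ref{theorem:2/3}, which gives $\nu_2(G)\geq \frac{2}{3}\cdot \nu_3(G)$.

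To determine the correct coefficients, note that we want nonnegative weights $\lambda_1,\lambda_2$ with $\lambda_1+\lambda_2=1$ such that
\[
\lambda_1\cdot \frac{35}{36}\cdot |V(G)| + \lambda_2\cdot \frac{2}{3}\cdot \nu_3(G) \;=\; \frac{140}{141}\cdot \frac{|V(G)|+2\cdot \nu_3(G)}{4} \;=\; \frac{35}{141}|V(G)| + \frac{70}{141}\nu_3(G).
\]
Matching coefficients gives $\lambda_1 = \frac{36}{141}$ and $\lambda_2 = \frac{105}{141}$, and indeed $\lambda_1 + \lambda_2 = 1$. In particular the algebraic identity $\alpha = \frac{4c}{1+3c}$ with $c=\frac{35}{36}$ yields $\alpha=\frac{140}{141}$, which is exactly the claimed coefficient.

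So the entire proof consists of invoking Theorem \ref{theorem:3536} (which requires $n\geq 48$, matching the hypothesis) with weight $\frac{36}{141}$, adding it to Proposition \ref{theorem:2/3} with weight $\frac{105}{141}$, and observing that the right hand side simplifies to $\frac{140}{141}\cdot \frac{|V(G)|+2\cdot \nu_3(G)}{4}$. There is no real obstacle to overcome here: the only nontrivial ingredient is Theorem \ref{theorem:3536}, whose proof depends on Lemma \ref{lemma:|V|/24} (already established via the Sang-il Oum structure theorem together with the Lemmas \ref{r3strings} and \ref{r3triangles} about resistance being preserved under diamond-string and triangle replacements). Thus the theorem follows immediately from the preceding machinery, and the proof is a two-line computation in the style of the analogous results for the general cubic and bridgeless cubic cases.
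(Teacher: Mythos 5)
Your proof is correct and is essentially identical to the paper's own argument: the paper likewise combines Theorem \ref{theorem:3536} (the $\nu_2(G)\geq \frac{35}{36}\cdot |V(G)|$ bound for $n\geq 48$) with Proposition \ref{theorem:2/3}, writing $141\cdot \nu_2(G)\geq 35\cdot n+105\cdot \nu_2(G)\geq 35\cdot n+70\cdot \nu_3(G)$, which is exactly your convex combination with weights $\frac{36}{141}$ and $\frac{105}{141}$. Your computation of the coefficients and of $\alpha=\frac{4c}{1+3c}=\frac{140}{141}$ checks out, so nothing is missing.
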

\begin{proof}
	From Theorem \ref{theorem:3536} we have 
	\[ 36\cdot \nu_2(G) \geq 35\cdot n, \]
	hence,
	\[ 141\cdot \nu_2(G) \geq 35\cdot n + 105 \cdot \nu_2(G) = 35\cdot n + 70 \cdot \frac{3}{2} \cdot \nu_2(G) \geq 35 \cdot n + 70 \cdot \nu_3(G). \]
	The last inequality follows from Proposition \ref{theorem:2/3}. Then,
	\[ 141 \cdot \nu_2(G)  \geq 35 \cdot (n + 2 \nu_3(G)).\]
	The final result we can write in the following form:
	\[ \nu_2(G) \geq \frac{140}{141}\cdot \frac{n + 2\cdot \nu_3(G)}{4}. \]
	The proof of the theorem is complete.
\end{proof}

We were unable to find a claw-free bridgeless cubic graph attaining the bound of the previous theorem. Moreover, we suspect that

\begin{conjecture}
	For any claw-free bridgeless cubic graph $G$
	\[ \nu_2(G) \geq \frac{164}{165}\cdot \frac{|V(G)| + 2\cdot \nu_3(G)}{4}. \]
\end{conjecture}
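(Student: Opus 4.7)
The plan is to adapt the linear combination argument from the preceding theorem. A short computation shows that combining a hypothetical bound $\nu_2(G) \geq \alpha n$ with Proposition \ref{theorem:2/3} via the recipe $\beta = 4\alpha/(1+3\alpha)$ yields $\nu_2(G) \geq \beta \cdot \tfrac{n+2\nu_3(G)}{4}$, so the choice $\beta = \tfrac{164}{165}$ corresponds exactly to $\alpha = \tfrac{41}{42}$. Thus the conjecture would follow from proving $\nu_2(G) \geq \tfrac{41}{42}\,n$ for every claw-free bridgeless cubic graph $G$. Since $|E(G)| = 3n/2$, Proposition \ref{theorem:2/3} also gives $\nu_2(G) \geq n - \tfrac{2}{3}r_3(G)$, so this target is in fact equivalent to the resistance bound $r_3(G) \leq n/28$.

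The case $r_3(G) \leq 2$ is handled immediately by Proposition \ref{prop:r3(G)}(1), which delivers the stronger equality $\nu_2(G) = \tfrac{n+2\nu_3(G)}{4}$; so we may assume $r_3(G) \geq 3$. Following the reductions in the proof of Lemma \ref{lemma:|V|/24}, Theorem \ref{sangIlOum} together with Lemma \ref{r3strings} lets us contract away any strings of diamonds (this decreases $n$ while preserving $r_3$, and so only strengthens the target), and Lemma \ref{r3triangles} then reduces the problem to $G = H_\triangle$ for a simple 2-edge-connected cubic graph $H$ with $r_3(H) = r_3(G) \geq 3$. By \cite{cavi:1998} this forces $|V(H)| \geq 28$ and hence $|V(G)| = 3|V(H)| \geq 84$, so the required inequality becomes $r_3(H) \leq \tfrac{3}{28}|V(H)|$. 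The hard part will be precisely this step: one must sharpen Steffen's resistance inequality (Lemma \ref{lemma:eckhard}) from $r_3(H) \leq |V(H)|/8 = \tfrac{3}{24}|V(H)|$ to $r_3(H) \leq \tfrac{3}{28}|V(H)|$, an improvement by a factor $6/7$ that is a genuine open problem in its own right and which, by the same algebraic manipulation used above, is essentially equivalent to Conjecture \ref{thm5253}. A natural line of attack is to refine the reducible-configuration analysis of \cite{steffen:2004}, extracting additional savings from the forbidden-subgraph structure.

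Finally, the coefficient $\tfrac{164}{165}$ would be best possible, as witnessed by $G^\ast = H^\ast_\triangle$, where $H^\ast$ is the 28-vertex bridgeless cubic graph of Figure \ref{tightexample6789} that is tight for Conjecture \ref{thm5253}. Indeed $|V(G^\ast)| = 84$, and Lemma \ref{r3triangles} gives $r_3(G^\ast) = r_3(H^\ast) = 3$, so $\nu_3(G^\ast) = 126 - 3 = 123$ and $\tfrac{|V(G^\ast)| + 2\nu_3(G^\ast)}{4} = \tfrac{165}{2}$. On the one hand Proposition \ref{theorem:2/3} gives $\nu_2(G^\ast) \geq \tfrac{2}{3}\nu_3(G^\ast) = 82$, and on the other hand Proposition \ref{prop:r3(G)}(2) (since $r_3(G^\ast) = 3$ is odd) forces $\nu_2(G^\ast) < \tfrac{165}{2}$, so $\nu_2(G^\ast) = 82 = \tfrac{164}{165}\cdot \tfrac{165}{2}$, matching the conjectured coefficient exactly.
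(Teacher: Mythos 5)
The statement you were given is presented in the paper only as a conjecture: the authors supply no proof, just the remark that the bound is attained by the graph obtained from the $28$-vertex graph of Figure \ref{tightexample6789} by replacing every vertex with a triangle. So there is no proof in the paper to compare yours against, and your proposal --- as you yourself say --- is not a proof either: the decisive step, improving Steffen's bound $r_3(H)\leq |V(H)|/8$ (Lemma \ref{lemma:eckhard}) to $r_3(H)\leq \tfrac{3}{28}\cdot|V(H)|$ for $2$-edge-connected cubic graphs, is left unresolved, and it is precisely the strengthening the authors themselves could not establish (by the same algebra it would also settle their Conjecture \ref{thm5253}). That is the genuine gap; the surrounding reductions do not close it. One smaller imprecision: the resistance bound $r_3(G)\leq |V(G)|/28$ is a \emph{sufficient} condition for $\nu_2(G)\geq\tfrac{41}{42}\cdot|V(G)|$ (via $\nu_2(G)\geq |E(G)|-\tfrac{1}{3}|E(G)|-\tfrac{2}{3}r_3(G)+\ldots$, i.e. $\nu_2(G)\geq n-\tfrac{2}{3}r_3(G)$ from Proposition \ref{theorem:2/3}), not literally equivalent to it as you claim.

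Everything you do establish is correct and consistent with the paper. The inversion $\beta=4\alpha/(1+3\alpha)$ of the linear-combination trick is right, and $\beta=\tfrac{164}{165}$ does correspond to $\alpha=\tfrac{41}{42}$, hence to the target $r_3(G)\leq |V(G)|/28$. Your chain of reductions --- Proposition \ref{prop:r3(G)} for $r_3(G)\leq 2$, then Theorem \ref{sangIlOum} with Lemmas \ref{r3strings} and \ref{r3triangles} to pass to $G=H_\bigtriangleup$, and the bound $|V(H)|\geq 28$ from \cite{cavi:1998} --- faithfully mirrors the proof of Lemma \ref{lemma:|V|/24}. Your tightness computation ($|V(G^\ast)|=84$, $\nu_3(G^\ast)=123$, and $\nu_2(G^\ast)=82=\tfrac{164}{165}\cdot\tfrac{165}{2}$ by squeezing between Proposition \ref{theorem:2/3} and Proposition \ref{prop:r3(G)}(2)) is exactly the arithmetic behind the paper's unproved tightness remark, and is more detail than the paper supplies; note only that the input $r_3(H^\ast)=3$ is itself inherited from the paper's unverified claim that Figure \ref{tightexample6789} attains the $\tfrac{52}{53}$ bound.
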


The bound presented by the previous conjecture is tight, in a sense, that there is a graph attaining it. That example is obtained from the graph from Figure \ref{tightexample6789} by replacing all its vertices with triangles.

\section{Inequalities and bounds for trees and unicyclic graphs}
\label{sec:TreeUnicyclic}

In the previous $2$ sections, we have presented some bounds for $\nu_{2}(G)$ in terms of $\frac{|V|+2\cdot \nu_3(G)}{4}$. If a graph $G$ has a perfect matching ($\nu_{1}(G)=\frac{|V|}{2}$), we have
\[\frac{|V|+2\cdot \nu_3(G)}{4}=\frac{\nu_{1}(G)+\nu_{3}(G)}{2}.\]
Hence, one may wonder whether a bound for $\nu_{2}(G)$ can be proved in terms of $\frac{\nu_{1}(G)+\nu_{3}(G)}{2}$ in some interesting graph classes. In the present section, we address a generalization of this question. More precisely, we aim to bound $\nu_{k}(G)$ in terms of $\frac{\nu_{k-1}(G)+\nu_{k+1}(G)}{2}$ for $k\geq 2$. Conjectures \ref{NearlyBipkConj} and \ref{BipkConj} present the main statements that we have tried to prove. In the present section, we verify these conjectures for trees and unicyclic graphs (graphs containing exactly $1$ cycle).

First, we prove some lemmas that will be helpful later in the section.

\begin{lemma}\label{lemma:pendantedge} Let $G$ be a graph, and let $e=(u,v)\in E(G)$. Assume that $d_G(u)=1$. Then for any $k\geq 1$, there is a maximum $k$-edge-colorable subgraph $H_k$ of $G$, such that $e\in E(H_k)$.
\end{lemma}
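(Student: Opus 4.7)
The plan is to start with an arbitrary maximum $k$-edge-colorable subgraph $H$ of $G$ together with a fixed proper $k$-edge-coloring of $H$ (equivalently, a decomposition of $H$ into $k$ edge-disjoint matchings $M_1,\dots,M_k$), and then either argue that $e\in E(H)$ already or modify $H$ into an $H_k$ of the same size which does contain $e$. If $e\in E(H)$, we simply take $H_k=H$, so from now on assume $e\notin E(H)$.

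The first step is to exploit the hypothesis $d_G(u)=1$: since $e$ is the unique edge of $G$ incident with $u$ and $e\notin E(H)$, the vertex $u$ is entirely unsaturated by $H$. Consequently, adding $e$ to $H$ cannot create any conflict at the $u$-end; the only potential conflict is at $v$. Let $d_H(v)$ be the degree of $v$ in $H$; since $H$ decomposes into $k$ matchings, $d_H(v)\le k$ and the number of colors appearing at $v$ equals $d_H(v)$.

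If $d_H(v)<k$, then some color $c\in\{1,\dots,k\}$ is missing at $v$, and we can extend the coloring by assigning color $c$ to $e$. The resulting subgraph is $k$-edge-colorable and has one more edge than $H$, contradicting the maximality of $H$. Hence $d_H(v)=k$, and in particular every color is used at $v$. Pick any edge $f\in E(H)$ incident with $v$, say colored $c$, and form $H_k=(H\setminus\{f\})\cup\{e\}$, recoloring $e$ with color $c$. Then $H_k$ is still properly $k$-edge-colored (the only edge that changed at $v$ is $f\mapsto e$, and $u$ was previously unsaturated), $|E(H_k)|=|E(H)|=\nu_k(G)$, and $e\in E(H_k)$, as required.

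There is no real obstacle here beyond the careful bookkeeping of the above color-swap; the argument is essentially a one-edge augmenting/exchange step made trivial by $d_G(u)=1$, which rules out any interference at the $u$-end.
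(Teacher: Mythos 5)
Your proposal is correct and follows essentially the same exchange argument as the paper: both observe that if $e\notin E(H)$ then maximality forces $v$ to be saturated, and then swap an edge of $H$ at $v$ for $e$ (giving $e$ the freed color), preserving the size. Your version just records the slightly stronger fact $d_H(v)=k$ before performing the swap, which is harmless extra detail.
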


\begin{proof} Let $H_k$ be any maximum $k$-edge-colorable subgraph of $G$. If $e\in E(H_k)$, then we are done. Thus, we can assume that $e\notin E(H_k)$. Since $H_k$ is a maximum $k$-edge-colorable subgraph of $G$ and $d_G(u)=1$, there is an edge $e'\in E(H_k)$, such that $e'$ is incident to $v$. Consider the subgraph $H'_{k}$ of $G$ defined as follows: $E(H'_{k})=(E(H_{k})\backslash \{e'\})\cup \{e\}$. Observe that $H'_{k}$ is $k$-edge-colorable, $e\in E(H'_k)$ and $|E(H'_{k})|=|E(H_{k})|$, hence $H'_{k}$ is a maximum $k$-edge-colorable subgraph of $G$ containing $e$. The proof of the lemma is complete.
\end{proof}

\begin{lemma}\label{lemma:bridgeedge} Let $k\geq 1$, $G$ be a connected graph, and let $e=(u,v)\in E(G)$ be a bridge of $G$. Assume that there is a maximum $k$-edge-colorable subgraph $H_k$ of $G$, such that $e\in E(H_k)$. Then 
\begin{equation*}
    \nu_{k}(G)=\nu_{k}(G_{1}e)+\nu_{k}(G_{2}e)-1.
\end{equation*} Here $G_1$ and $G_2$ are the components of $G-e$, and $G_{1}e$, $G_{2}e$ are the supergraphs of $G_1$ and $G_2$, respectively, that satisfy the equalities $E(G_{1}e)=E(G_{1})\cup \{e\}$ and $E(G_{2}e)=E(G_{2})\cup \{e\}$.
\end{lemma}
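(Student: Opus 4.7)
The plan is to prove the equality by establishing both inequalities separately, and the hypothesis together with Lemma \ref{lemma:pendantedge} will make both directions routine.

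For the upper bound $\nu_{k}(G)\leq \nu_{k}(G_{1}e)+\nu_{k}(G_{2}e)-1$, I would take the maximum $k$-edge-colorable subgraph $H_k$ of $G$ that is guaranteed to contain $e$ by hypothesis, and split it along the bridge. Set $H_k^{(1)} = H_k \cap G_1 e$ and $H_k^{(2)} = H_k \cap G_2 e$; both are $k$-edge-colorable (inheriting a $k$-edge-coloring from $H_k$), and since $e$ is the unique edge shared between $G_1e$ and $G_2e$, we get $|E(H_k)| = |E(H_k^{(1)})| + |E(H_k^{(2)})| - 1 \leq \nu_k(G_1 e) + \nu_k(G_2 e) - 1$.

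For the lower bound $\nu_{k}(G)\geq \nu_{k}(G_{1}e)+\nu_{k}(G_{2}e)-1$, I would use Lemma \ref{lemma:pendantedge} twice. Without loss of generality let $u\in V(G_1)$ and $v\in V(G_2)$; then in $G_1 e$ the vertex $v$ is pendant (its sole edge is $e$), and symmetrically $u$ is pendant in $G_2 e$. Hence Lemma \ref{lemma:pendantedge} guarantees maximum $k$-edge-colorable subgraphs $F_1\subseteq G_1 e$ and $F_2\subseteq G_2 e$ with $e\in E(F_1)\cap E(F_2)$. Fix $k$-edge-colorings of $F_1$ and $F_2$ using the same palette of $k$ colors; by permuting the palette of $F_2$ if necessary, we may assume $e$ receives the same color in both. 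The union $F_1\cup F_2$ is then a subgraph of $G$ whose given edge-coloring is proper (any two of its edges sharing a vertex lie entirely in $F_1$ or entirely in $F_2$, since $u$ only meets edges of $G_1\cup\{e\}$ and $v$ only meets edges of $G_2\cup\{e\}$). Thus $F_1\cup F_2$ is $k$-edge-colorable, so
\[
\nu_k(G)\geq |E(F_1\cup F_2)| = |E(F_1)| + |E(F_2)| - 1 = \nu_k(G_1 e) + \nu_k(G_2 e) - 1.
\]

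The two bounds together yield the claimed equality. I do not expect a serious obstacle: the bridge structure ensures that recoloring and gluing happens locally at the two endpoints $u$ and $v$, and the palette-permutation step that aligns the color of $e$ in the two halves is the only subtlety worth spelling out carefully.
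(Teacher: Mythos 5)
Your proposal is correct and follows essentially the same route as the paper: both split $H_k$ along the bridge into its restrictions on $G_1e$ and $G_2e$ and invoke Lemma \ref{lemma:pendantedge} to produce maximum $k$-edge-colorable subgraphs of the two pieces containing $e$. The only organizational difference is that you argue the two inequalities separately and spell out the palette-permutation needed to glue the colorings along $e$, a step the paper's swap argument uses implicitly when it asserts that $H'_k$ is $k$-edge-colorable.
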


\begin{proof} Let $H^{(1)}$ and $H^{(2)}$ be the restrictions of $H_{k}$ in the graphs $G_{1}e$ and $G_{2}e$, respectively. Clearly, these subgraphs are $k$-edge-colorable. We claim that $H^{(1)}$ and $H^{(2)}$ are maximum $k$-edge-colorable subgraphs of $G_{1}e$ and $G_{2}e$, respectively. Assume that $|E(H^{(1)})|<\nu_{k}(G_{1}e)$. Then, by Lemma \ref{lemma:pendantedge}, there is a maximum $k$-edge-colorable subgraph $H'^{(1)}$ containing $e$. Consider the subgraph $H'_{k}$ of $G$ defined as follows:
\begin{equation*}
    E(H'_{k})=(E(H_k)\backslash E(H^{(1)}))\cup E(H'^{(1)}).
\end{equation*} Observe that $H'_{k}$ is $k$-edge-colorable and $|E(H'_{k})|>|E(H_{k})|$ contradicting the choice of $H_{k}$. Similarly, one can prove that $H^{(2)}$ is a maximum $k$-edge-colorable subgraphs of $G_{2}e$. 

We have the following chain of equalities:
\begin{align*}
    \nu_{k}(G) &=|E(H_{k})|\\
               &= |E(H^{(1)})|+|E(H^{(2)})|-1\\
               &= \nu_{k}(G_{1}e)+\nu_{k}(G_{2}e)-1.
\end{align*}
The proof of the lemma is complete.
\end{proof}

Our first theorem in this section verifies Conjecture \ref{NearlyBipkConj} for graphs with at most $1$ cycle.

\begin{theorem}\label{thm:TreeUniLowBound} For any $k\geq 2$ and a graph $G$ containing at most $1$ cycle, 
\begin{eqnarray*}
	\nu_{k}(G) \geq \left \lfloor \frac{\nu_{k-1}(G) + \nu_{k+1}(G)}{2} \right \rfloor.
\end{eqnarray*}
\end{theorem}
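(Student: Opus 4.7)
I would proceed by induction on $|V(G)|$. The base case $|V(G)| \le 1$ is immediate, since all $\nu_j(G) = 0$ and the bound reads $0 \ge 0$.

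For the inductive step I distinguish two structural situations. If $G$ is a single cycle $C_n$, the values $\nu_j(C_n)$ are explicit ($\nu_1(C_n) = \lfloor n/2 \rfloor$; $\nu_2(C_n)$ equals $n$ or $n-1$ according to the parity of $n$; $\nu_j(C_n) = n$ for $j \ge 3$), so verifying the inequality for every $k \ge 2$ is a short case check. Otherwise $G$ has a pendant edge $e = uv$ with $d_G(u) = 1$; set $G' := G - u$. Then $G'$ still has at most one cycle, so the inductive hypothesis yields $\nu_k(G') \ge \lfloor (\nu_{k-1}(G') + \nu_{k+1}(G'))/2 \rfloor$. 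By Lemma~\ref{lemma:pendantedge} applied to each $j \in \{k-1, k, k+1\}$, some maximum $j$-edge-colorable subgraph of $G$ contains $e$, whence $\delta_j := \nu_j(G) - \nu_j(G') \in \{0, 1\}$; in fact $\delta_j = 1$ iff some maximum $j$-edge-colorable subgraph $H'$ of $G'$ satisfies $d_{H'}(v) \le j - 1$, since in that case a proper $j$-coloring of $H'$ leaves a color unused at $v$ that can be assigned to $e$.

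The desired bound becomes $\nu_k(G') + \delta_k \ge \lfloor (\nu_{k-1}(G') + \nu_{k+1}(G') + \delta_{k-1} + \delta_{k+1})/2 \rfloor$. When $\delta_k = 1$ this follows at once from the identity $\lfloor (x + 2)/2 \rfloor = \lfloor x/2 \rfloor + 1$, and when $\delta_k = \delta_{k-1} = \delta_{k+1} = 0$ the inductive hypothesis suffices verbatim. The essential difficulty is the configuration $(\delta_{k-1}, \delta_k, \delta_{k+1}) = (1, 0, 1)$, in which the naive induction is short by exactly one; the other residual configurations are of the same shape but milder.

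My plan to dispatch these residual configurations is to show that $\delta_k = 0$ sharpens the inductive hypothesis to $2\nu_k(G') \ge \nu_{k-1}(G') + \nu_{k+1}(G') + \delta_{k-1} + \delta_{k+1}$, which yields the bound. Fix maximum $(k-1)$- and $(k+1)$-edge-colorable subgraphs $H_{k-1}, H_{k+1}$ of $G'$ witnessing $\delta_{k-1}$ and $\delta_{k+1}$, so that $d_{H_{k-1}}(v) + d_{H_{k+1}}(v) \le (k - 1 - \delta_{k-1}) + (k + 1 - \delta_{k+1}) = 2k - (\delta_{k-1} + \delta_{k+1})$. Viewed as $2k$ matchings, they color the multigraph $M := H_{k-1} \cup H_{k+1}$. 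Partition the $2k$ matchings into two groups of $k$ so that the incidences at $v$ are split as evenly as possible; taking the union in $G'$ (dropping one copy of every doubled edge) produces two $k$-edge-colorable subgraphs $K_1, K_2$ of $G'$ with $|K_1| + |K_2| = \nu_{k-1}(G') + \nu_{k+1}(G')$, at least one of which satisfies $d(v) \le k - 1$. The assumption $\delta_k = 0$ -- which asserts that \emph{every} maximum $k$-edge-colorable subgraph of $G'$ saturates $v$ -- forces each such non-saturating $K_i$ to have at most $\nu_k(G') - 1$ edges, and the desired sharpening follows by summation. The most delicate step is carrying out the partition when the unique cycle of $G'$ is odd and appears as an alternating odd cycle in $M$, since an odd cycle cannot be split evenly across the two outputs; this case is handled by first removing an edge of the odd cycle of $G'$ to reduce to a forest, running the partition there, and then reintroducing the edge with a short case analysis on its effect on $\nu_{k-1}(G')$, $\nu_k(G')$, and $\nu_{k+1}(G')$.
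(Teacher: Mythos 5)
Your overall framework---delete a pendant vertex $u$ with neighbor $v$, observe via Lemma~\ref{lemma:pendantedge} that each $\delta_j:=\nu_j(G)-\nu_j(G-u)\in\{0,1\}$ and that $\delta_j=1$ exactly when some maximum $j$-edge-colorable subgraph of $G'=G-u$ misses a color at $v$, and reduce everything to a ``sharpened'' inequality when $\delta_k=0$---is coherent and genuinely different from the paper's proof, which instead performs a long structural reduction (pruning vertices of degree $\geq 3$ off the cycle by comparing $d_G(x)$ with $k$) and finishes by proving $x_k\leq\lceil x_{k-1}/2\rceil$ via a maximum matching inside the set of cycle edges deleted to reach a $(k-1)$-colorable subgraph. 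However, your argument has a genuine gap at its load-bearing step: the claim that the $2k$ color classes of $H_{k-1}\cup H_{k+1}$ can be regrouped into two $k$-edge-colorable subgraphs $K_1,K_2$ of $G'$ with $|E(K_1)|+|E(K_2)|=\nu_{k-1}(G')+\nu_{k+1}(G')$. If that were true for every $G'$ with at most one cycle, it would give $2\nu_k(G')\geq\nu_{k-1}(G')+\nu_{k+1}(G')$, i.e.\ $\nu_k(G')\geq\left\lceil\frac{\nu_{k-1}(G')+\nu_{k+1}(G')}{2}\right\rceil$; but the paper's tight examples (Remark~\ref{rem:TreeUniLowBound} with $l$ odd) are unicyclic graphs with $\nu_{k-1}+\nu_{k+1}=2l(k-1)+l$ odd and $\nu_k$ equal to the \emph{floor}, so the splitting claim is provably false when the unique cycle of $G'$ is odd. (Such a $G'$ does arise in your induction: attach one extra leaf to that example.) You do flag the odd cycle as ``the most delicate step,'' but the proposed remedy---delete a cycle edge, split the resulting forest, reintroduce the edge ``with a short case analysis''---is precisely where the entire difficulty of the non-bipartite case lives, and it is not carried out; reinserting the edge perturbs $\nu_{k-1},\nu_k,\nu_{k+1}$ and the condition $\delta_k=0$ in ways that do not obviously recombine.

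Two further points need attention even in the forest case. First, ensuring simultaneously that (a) both copies of every edge of $H_{k-1}\cap H_{k+1}$ land in different groups (otherwise $|E(K_1)|+|E(K_2)|$ drops below $\nu_{k-1}(G')+\nu_{k+1}(G')$), (b) every vertex has its incidences split so that $\Delta(K_i)\leq k$, and (c) the degree at $v$ is split as evenly as possible, is not automatic from an arbitrary regrouping of the $2k$ matchings; it is essentially de Werra's balanced edge-coloring theorem for bipartite multigraphs, which you should either invoke or prove. Second, your remark that the configurations $(\delta_{k-1},\delta_k,\delta_{k+1})=(1,0,0)$ and $(0,0,1)$ are ``milder'' is misleading: when $\nu_{k-1}(G')+\nu_{k+1}(G')$ is odd they are short by one just as $(1,0,1)$ is, so they too depend entirely on the unproven sharpening. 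As it stands the proposal is an interesting alternative strategy, but it does not constitute a proof of Theorem~\ref{thm:TreeUniLowBound}.
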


\begin{proof} We will prove the theorem by induction on $n$. The statement of the theorem is trivial when $n=1,2$. Assume that it is true for all graphs having at most $1$ cycle and less than $n$ vertices, and consider a graph $G$ with $n$ vertices and containing at most $1$ cycle. Clearly, we can assume that $G$ is connected and $\Delta(G)\geq 3$ (the statement of the theorem is true for cycles and paths). 

Let $T$ be a tree defined as follows: if $G$ is a tree, then $T=G$, otherwise $T=G/C$. Here $C$ is the only cycle of $G$, and $T$ is the tree obtained from $G$ by contracting $C$ to a vertex $v_C$. View $T$ as a rooted tree. The root of $T$ is any of its vertices, if $G=T$, and is the vertex $v_C$, otherwise. Below, we will speak about children, grand-children of vertices of $G$. This relationship will be viewed from the perspective of the tree $T$.

Let us show that, without loss of generality, we can assume that there is no vertex of $G$ with degree $2$ that is adjacent to a vertex of degree $1$. On the opposite assumption, consider a vertex $z$ of degree $2$ that is adjacent to a vertex $y$ of degree $1$. Observe that since $k\geq 2$, we have $\nu_{i}(G)=1+\nu_{i}(G-y)$ for $i=k,k+1$ and $\nu_{k-1}(G)\leq 1+\nu_{k-1}(G-y)$. Thus, we will have:
\begin{align*}
    \nu_{k}(G) &=\nu_k(G-y)+1\\
               & \geq \left \lfloor \frac{\nu_{k+1}(G-y)  + \nu_{k-1}(G-y) }{2}\right \rfloor+1\\
               & = \left \lfloor \frac{\nu_{k+1}(G-y)+1  + \nu_{k-1}(G-y)+1 }{2}\right \rfloor\\
               &\geq \left \lfloor \frac{\nu_{k+1}(G) + \nu_{k-1}(G)}{2}\right \rfloor.
\end{align*} Here the inequality follows from induction hypothesis applied to $G-y$. Thus, we can assume that no vertex of $G$ that has degree $2$ is adjacent to a vertex of degree $1$.

Next, let us show that all vertices of $G$ with degree at least $3$ lie on $C$-the unique cycle of $G$. On the opposite assumption, consider a vertex $x$ of degree at least $3$ that does not lie on the cycle and it has no children, grand-children, etc. that are of degree at least $3$. Observe that all the children of $x$ are of degree $1$. We will consider some cases.

\medskip

\begin{figure}[ht]
\centering
\begin{minipage}[b]{.5\textwidth}
  \begin{center}
		\begin{tikzpicture}[line cap=round,line join=round,>=triangle 45,x=1.0cm,y=1.0cm]
\clip(-3.5,0.) rectangle (4.,6.000401548929543);
\draw(0.,2.) circle (1.6991762710207556cm);
\draw (2.163243823759741,5.749412616936055) node[anchor=north west] {$\geq k+1$};
\draw (0.9790737243720234,3.388745708272143)-- (1.5377108546082783,4.10310351880089);
\draw (1.5377108546082783,4.10310351880089)-- (1.3475677243101816,4.9155332573473025);
\draw (1.5377108546082783,4.10310351880089)-- (1.7624254631423926,4.898247518229294);
\draw (1.5377108546082783,4.10310351880089)-- (2.385726538800657,4.65615805122504);
\draw (1.5377108546082783,4.10310351880089)-- (2.5518163801005547,4.39012372336622);
\draw (1.6056983061153253,4.1) node[anchor=north west] {$x$};
\draw (-0.22115103017838977,2.423809268022345) node[anchor=north west] {$G'$};
\begin{scriptsize}
\draw [fill=black] (0.9790737243720234,3.388745708272143) circle (2.5pt);
\draw [fill=black] (1.3475677243101816,4.9155332573473025) circle (1.5pt);
\draw [fill=black] (1.7624254631423926,4.898247518229294) circle (1.5pt);
\draw [fill=black] (2.5518163801005547,4.39012372336622) circle (1.5pt);
\draw [fill=black] (1.9667765830359647,4.880274665606372) circle (0.5pt);
\draw [fill=black] (2.2453485331226783,4.762981212938282) circle (0.5pt);
\draw [fill=black] (2.1060625580793215,4.836289620855839) circle (0.5pt);
\draw [fill=black] (1.5377108546082783,4.10310351880089) circle (2.5pt);
\draw [fill=black] (2.385726538800657,4.65615805122504) circle (1.5pt);
\end{scriptsize}
\end{tikzpicture}
		\end{center}
		\caption{$d_G(x) \geq k+2$}
		\label{not_on_cycle_k+2}
\end{minipage}%
\begin{minipage}[b]{.5\textwidth}
  
  \begin{center}
\begin{tikzpicture}[line cap=round,line join=round,>=triangle 45,x=1.0cm,y=1.0cm]
\clip(-3.5,0.) rectangle (4.,6.000401548929543);
\draw(0.,2.) circle (1.6991762710207556cm);
\draw (2.163243823759741,5.749412616936055) node[anchor=north west] {$k$};
\draw (0.9790737243720234,3.388745708272143)-- (1.5377108546082783,4.10310351880089);
\draw (1.5377108546082783,4.10310351880089)-- (1.3475677243101816,4.9155332573473025);
\draw (1.5377108546082783,4.10310351880089)-- (1.7624254631423926,4.898247518229294);
\draw (1.5377108546082783,4.10310351880089)-- (2.385726538800657,4.65615805122504);
\draw (1.5377108546082783,4.10310351880089)-- (2.5518163801005547,4.39012372336622);
\draw (1.3056983061153253,3.8) node[anchor=north west] {$e$};
\draw (1.6056983061153253,4.1) node[anchor=north west] {$x$};
\draw (-0.22115103017838977,2.423809268022345) node[anchor=north west] {$G'$};
\draw (2.644305943413925,4.5362994456341985) node[anchor=north west] {$E'$};
\begin{scriptsize}
\draw [fill=black] (0.9790737243720234,3.388745708272143) circle (2.5pt);
\draw [fill=black] (1.3475677243101816,4.9155332573473025) circle (1.5pt);
\draw [fill=black] (1.7624254631423926,4.898247518229294) circle (1.5pt);
\draw [fill=black] (2.5518163801005547,4.39012372336622) circle (1.5pt);
\draw [fill=black] (1.9667765830359647,4.880274665606372) circle (0.5pt);
\draw [fill=black] (2.2453485331226783,4.762981212938282) circle (0.5pt);
\draw [fill=black] (2.1060625580793215,4.836289620855839) circle (0.5pt);
\draw [fill=black] (1.5377108546082783,4.10310351880089) circle (2.5pt);
\draw [fill=black] (2.385726538800657,4.65615805122504) circle (1.5pt);
\end{scriptsize}
\end{tikzpicture}
		\end{center}
		\caption{$d_G(x) = k+1$}
		\label{not_on_cycle_k+1}
\end{minipage}
\end{figure}

Case 1: $d_G(x) \geq k+2$. Then $G$ can be represented as on Figure \ref{not_on_cycle_k+2}. 

It can be easily seen that in this case there is an edge $e$ adjacent to $x$ such that $\nu_{i}(G)=\nu_{i}(G-e)$ for $i=k-1,k,k+1$. Hence, we have:
\begin{align*}
    \nu_{k}(G) &=\nu_k(G-e) \\
               & \geq \left \lfloor \frac{\nu_{k+1}(G-e)  + \nu_{k-1}(G-e) }{2}\right \rfloor \\
               &=\left \lfloor \frac{\nu_{k+1}(G) + \nu_{k-1}(G) }{2}\right \rfloor.
\end{align*} Here the inequality follows from induction hypothesis applied to the components of $G-e$.

\medskip

Case 2: $3\leq d_G(x) = k+1$. Then $G$ can be represented as on Figure \ref{not_on_cycle_k+1}. Here $E'$ denotes the edge-set of the component of $G-e$ containing $x$.

We have
\[\nu_{k-1}(G) \leq \nu_{k-1}(G') + |E'|-1,\]
\[\nu_{k}(G) = \nu_{k}(G') + |E'|,\]
\[\nu_{k+1}(G) = \nu_{k+1}(G'e) + |E'|.\]
It is easy to see that $\nu_{k+1}(G'e)\leq \nu_{k+1}(G')+1$, hence, by induction hypothesis, we have
\[\left \lfloor \frac{\nu_{k-1}(G') + \nu_{k+1}(G'e)-1}{2} \right \rfloor \leq \left \lfloor \frac{\nu_{k-1}(G') + \nu_{k+1}(G')}{2} \right \rfloor \leq \nu_k(G').\]
The last inequality, in its turn, implies:
\begin{align*}
    \nu_{k}(G) &=\nu_k(G')+ |E'| \\
               & \geq \left \lfloor \frac{\nu_{k-1}(G') + \nu_{k+1}(G'e)-1}{2} \right \rfloor+ |E'| \\
               &\geq \left \lfloor \frac{\nu_{k+1}(G) + \nu_{k-1}(G) }{2}\right \rfloor.
\end{align*} 

\medskip

\begin{figure}[ht]
\centering
\begin{minipage}[b]{.5\textwidth}
  \begin{center}
\begin{tikzpicture}[line cap=round,line join=round,>=triangle 45,x=1.0cm,y=1.0cm]
\clip(-3.5,0.) rectangle (4.,6.000401548929543);
\draw(0.,2.) circle (1.6991762710207556cm);
\draw (2.163243823759741,5.749412616936055) node[anchor=north west] {$k-1$};
\draw (0.9790737243720234,3.388745708272143)-- (1.5377108546082783,4.10310351880089);
\draw (1.5377108546082783,4.10310351880089)-- (1.3475677243101816,4.9155332573473025);
\draw (1.5377108546082783,4.10310351880089)-- (1.7624254631423926,4.898247518229294);
\draw (1.5377108546082783,4.10310351880089)-- (2.385726538800657,4.65615805122504);
\draw (1.5377108546082783,4.10310351880089)-- (2.5518163801005547,4.39012372336622);
\draw (1.3056983061153253,3.8) node[anchor=north west] {$e$};
\draw (1.6056983061153253,4.1) node[anchor=north west] {$x$};
\draw (-0.22115103017838977,2.423809268022345) node[anchor=north west] {$G'$};
\draw (2.644305943413925,4.5362994456341985) node[anchor=north west] {$E'$};
\begin{scriptsize}
\draw [fill=black] (0.9790737243720234,3.388745708272143) circle (2.5pt);
\draw [fill=black] (1.3475677243101816,4.9155332573473025) circle (1.5pt);
\draw [fill=black] (1.7624254631423926,4.898247518229294) circle (1.5pt);
\draw [fill=black] (2.5518163801005547,4.39012372336622) circle (1.5pt);
\draw [fill=black] (1.9667765830359647,4.880274665606372) circle (0.5pt);
\draw [fill=black] (2.2453485331226783,4.762981212938282) circle (0.5pt);
\draw [fill=black] (2.1060625580793215,4.836289620855839) circle (0.5pt);
\draw [fill=black] (1.5377108546082783,4.10310351880089) circle (2.5pt);
\draw [fill=black] (2.385726538800657,4.65615805122504) circle (1.5pt);
\end{scriptsize}
\end{tikzpicture}
		\end{center}
		\caption{$d_G(x) = k$}
		\label{not_on_cycle_k}
\end{minipage}%
\begin{minipage}[b]{.5\textwidth}
  
  \begin{center}
\begin{tikzpicture}[line cap=round,line join=round,>=triangle 45,x=1.0cm,y=1.0cm]
\clip(-3.5,0.) rectangle (4.,6.000401548929543);
\draw(0.,2.) circle (1.6991762710207556cm);
\draw (2.163243823759741,5.749412616936055) node[anchor=north west] {$\leq k-2$};
\draw (0.9790737243720234,3.388745708272143)-- (1.5377108546082783,4.10310351880089);
\draw (1.5377108546082783,4.10310351880089)-- (1.3475677243101816,4.9155332573473025);
\draw (1.5377108546082783,4.10310351880089)-- (1.7624254631423926,4.898247518229294);
\draw (1.5377108546082783,4.10310351880089)-- (2.385726538800657,4.65615805122504);
\draw (1.5377108546082783,4.10310351880089)-- (2.5518163801005547,4.39012372336622);
\draw (1.3056983061153253,3.8) node[anchor=north west] {$e$};
\draw (1.6056983061153253,4.1) node[anchor=north west] {$x$};
\draw (-0.22115103017838977,2.423809268022345) node[anchor=north west] {$G'$};
\draw (2.644305943413925,4.5362994456341985) node[anchor=north west] {$E'$};
\begin{scriptsize}
\draw [fill=black] (0.9790737243720234,3.388745708272143) circle (2.5pt);
\draw [fill=black] (1.3475677243101816,4.9155332573473025) circle (1.5pt);
\draw [fill=black] (1.7624254631423926,4.898247518229294) circle (1.5pt);
\draw [fill=black] (2.5518163801005547,4.39012372336622) circle (1.5pt);
\draw [fill=black] (1.9667765830359647,4.880274665606372) circle (0.5pt);
\draw [fill=black] (2.2453485331226783,4.762981212938282) circle (0.5pt);
\draw [fill=black] (2.1060625580793215,4.836289620855839) circle (0.5pt);
\draw [fill=black] (1.5377108546082783,4.10310351880089) circle (2.5pt);
\draw [fill=black] (2.385726538800657,4.65615805122504) circle (1.5pt);
\end{scriptsize}
\end{tikzpicture}
		\end{center}
		\caption{$d_G(x) \leq k-1$}
		\label{not_on_cycle_k-1}
\end{minipage}
\end{figure}

Case 3: $3\leq d_G(x) = k$. Then $G$ can be represented as on Figure \ref{not_on_cycle_k}. Here $E'$ denotes the edge-set of the component of $G-e$ containing $x$.
	
We have the following equalities:
\[\nu_{k-1}(G) = \nu_{k-1}(G') + |E'|,\]
\[\nu_{k}(G) = \nu_{k}(G'e) + |E'|,\]
\[\nu_{k+1}(G) = \nu_{k+1}(G'e) + |E'|.\]
It is easy to see that $\nu_{k-1}(G')\leq \nu_{k-1}(G'e)$, hence, by induction hypothesis, we have
\[\left \lfloor  \frac{\nu_{k-1}(G') + \nu_{k+1}(G'e)}{2} \right \rfloor \leq \left \lfloor \frac{\nu_{k-1}(G'e) + \nu_{k+1}(G'e)}{2}\right \rfloor  \leq \nu_k(G'e).\]
The last inequality, in turn, implies:
\begin{align*}
    \nu_{k}(G) &=\nu_k(G'e)+ |E'| \\
               & \geq \left \lfloor \frac{\nu_{k-1}(G') + \nu_{k+1}(G'e)}{2} \right \rfloor+ |E'| \\
               &=\left \lfloor \frac{\nu_{k+1}(G) + \nu_{k-1}(G) }{2}\right \rfloor.
\end{align*} 

\medskip

Case 4: $3\leq d_G(x) \leq k-1$. Then $G$ can be represented as on Figure \ref{not_on_cycle_k-1}. Here $E'$ denotes the edge-set of the component of $G-e$ containing $x$.
We have the following equalities:
\[\nu_{k-1}(G) = \nu_{k-1}(G'e) + |E'|,\]
\[\nu_{k}(G) = \nu_{k}(G'e) + |E'|,\]
\[\nu_{k+1}(G) = \nu_{k+1}(G'e) + |E'|.\]
By induction hypothesis, we have
\[\nu_k(G'e) \geq \left \lfloor \frac{\nu_{k-1}(G'e) + \nu_{k+1}(G'e)}{2} \right \rfloor.\]
Hence,
\begin{align*}
    \nu_{k}(G) &=\nu_k(G'e) + |E'| \\
               & \geq \left \lfloor \frac{\nu_{k+1}(G'e)  + \nu_{k-1}(G'e) }{2}\right \rfloor +|E'| \\
               &= \left \lfloor \frac{\nu_{k+1}(G'e) +|E'| + \nu_{k-1}(G'e)+|E'| }{2}\right \rfloor\\
               &=\left \lfloor \frac{\nu_{k+1}(G) + \nu_{k-1}(G) }{2}\right \rfloor.
\end{align*}

\medskip

The considered cases imply that all vertices of $G$ with degree at least $3$ lie on $C$. If there is a vertex $x$ of $G$ lying on $C$ with $d_G(x)\geq k+2$, then $G$ can be represented as on Figure \ref{on_cycle_k+2}.

\begin{figure}[ht]
\centering
\begin{minipage}[b]{.5\textwidth}
  \begin{center}
  \hspace*{-4cm}
\begin{tikzpicture}[line cap=round,line join=round,>=triangle 45,x=1.0cm,y=1.0cm]
\clip(-8.,0.) rectangle (8.585264473529877,6.);
\draw(0.,2.) circle (1.6991762710207556cm);
\draw (0.9790737243720234,3.388745708272143)-- (0.92,4.36);
\draw (0.9790737243720234,3.388745708272143)-- (1.265597748851844,4.326077306776009);
\draw (0.9790737243720234,3.388745708272143)-- (1.9000477859370812,3.9528714026082232);
\draw (0.9790737243720234,3.388745708272143)-- (1.7234674181992415,4.10394249946792);
\draw (1.2856697013859437,5.390201303851337) node[anchor=north west] {$\geq k$};
\draw (0.6163658827366432,3.2) node[anchor=north west] {$x$};
\begin{scriptsize}
\draw [fill=black] (0.9790737243720234,3.388745708272143) circle (2.5pt);
\draw [fill=black] (0.92,4.36) circle (1.5pt);
\draw [fill=black] (1.265597748851844,4.326077306776009) circle (1.5pt);
\draw [fill=black] (1.9000477859370812,3.9528714026082232) circle (1.5pt);
\draw [fill=black] (1.4037561239949286,4.289375050106422) circle (0.5pt);
\draw [fill=black] (1.4996695122562225,4.244615468917818) circle (0.5pt);
\draw [fill=black] (1.582794448749344,4.187067435961041) circle (0.5pt);
\draw [fill=black] (1.7234674181992415,4.10394249946792) circle (1.5pt);
\end{scriptsize}
\end{tikzpicture}
		\end{center}
		\caption{$d_G(x) \geq k+2$}
		\label{on_cycle_k+2}
\end{minipage}%
\begin{minipage}[b]{.5\textwidth}
  \begin{center}
  \hspace*{-4cm}
\begin{tikzpicture}[line cap=round,line join=round,>=triangle 45,x=1.0cm,y=1.0cm]
\clip(-8.,0.) rectangle (8.585264473529877,6.);
\draw(0.,2.) circle (1.6991762710207556cm);
\draw (0.9790737243720234,3.388745708272143)-- (0.92,4.36);
\draw (0.9790737243720234,3.388745708272143)-- (1.265597748851844,4.326077306776009);
\draw (0.9790737243720234,3.388745708272143)-- (1.9000477859370812,3.9528714026082232);
\draw (0.9790737243720234,3.388745708272143)-- (1.7234674181992415,4.10394249946792);
\draw (1.2856697013859437,5.2) node[anchor=north west] {$\leq k-1$};
\draw (0.6163658827366432,3.2) node[anchor=north west] {$x$};
\begin{scriptsize}
\draw [fill=black] (0.9790737243720234,3.388745708272143) circle (2.5pt);
\draw [fill=black] (0.92,4.36) circle (1.5pt);
\draw [fill=black] (1.265597748851844,4.326077306776009) circle (1.5pt);
\draw [fill=black] (1.9000477859370812,3.9528714026082232) circle (1.5pt);
\draw [fill=black] (1.4037561239949286,4.289375050106422) circle (0.5pt);
\draw [fill=black] (1.4996695122562225,4.244615468917818) circle (0.5pt);
\draw [fill=black] (1.582794448749344,4.187067435961041) circle (0.5pt);
\draw [fill=black] (1.7234674181992415,4.10394249946792) circle (1.5pt);
\end{scriptsize}
\end{tikzpicture}
		\end{center}
		\caption{$d_G(x) \leq k+1$}
		\label{on_cycle_k+1}
\end{minipage}
\end{figure}
	
Observe that there is an edge $e$ of $C$ that is incident to $x$ and $\nu_{k+1}(G)=\nu_{k+1}(G-e)$. Moreover, for any edge $f$ of $C$ that is incident to $x$, $\nu_{i}(G)=\nu_{i}(G-f)$ for $i=k-1,k$. Hence we have:

\begin{align*}
    \nu_{k}(G) &=\nu_k(G-e)\\
               & \geq \left \lfloor \frac{\nu_{k+1}(G-e)  + \nu_{k-1}(G-e) }{2}\right \rfloor\\
               &=\left \lfloor \frac{\nu_{k+1}(G) + \nu_{k-1}(G)}{2}\right \rfloor.
\end{align*} Here the inequality follows from induction hypothesis applied to the components of $G-e$.

\medskip

Thus, we can assume that for any vertex $x$ of $G$ lying on $C$, we have $d_G(x)\leq k+1$. Then $G$ can be represented as on Figure \ref{on_cycle_k+1}.

Let us show that $\nu_{k+1}(G)=|E(G)|$, that is, $G$ is $(k+1)$-edge-colorable. Consider the colors $\{1,2,...,k,k+1\}$. Color the edges of the cycle $C$ with colors $1,2,3$. Observe that at each vertex of $C$ only $2$ colors will be present. Hence at each vertex of $C$ there will be missing $k-1$ colors. Since each vertex $x$ of $C$ is adjacent to at most $k-1$ vertices lying outside $C$, we can extend the edge-coloring of $C$, to a $(k+1)$-edge-coloring of $G$.

Define $x_{k-1}$ and $x_k$ as the minimum number of edges of $C$, that one needs to remove from $G$ in order to obtain a $(k-1)$ or $k$-edge-colorable subgraph of $G$, respectively. We have:
\[\nu_{k-1}(G)=|E(G)|-x_{k-1},\]
\[\nu_{k}(G)=|E(G)|-x_{k},\]
\[\nu_{k+1}(G)=|E(G)|.\]
Observe that the inequality
\[\nu_{k}(G) \geq \left \lfloor \frac{\nu_{k+1}(G) + \nu_{k-1}(G)}{2}\right \rfloor\]
that we need to prove, is equivalent to
\begin{equation}\label{xkinequality}
x_{k}\leq \left \lceil \frac{x_{k-1}}{2} \right \rceil.
\end{equation}
Let us show that the latter inequality is true. Let $J_{k-1}$ be a subgraph of $C$, such that $G-E(J_{k-1})$ is $(k-1)$-edge-colorable and $|E(J_{k-1})|=x_{k-1}$. Observe that $\Delta(J_{k-1})\leq 2$, hence \[\nu_1(J_{k-1})\geq \left \lfloor \frac{|E(J_{k-1})|}{2} \right \rfloor=\left \lfloor \frac{x_{k-1}}{2} \right \rfloor.\] Let $M_{k-1}$ be a maximum matching of $J_{k-1}$. Then $G-(E(J_{k-1})\backslash M_{k-1})$ is $k$-edge-colorable, hence
\begin{align*}
x_k &\leq |E(J_{k-1})\backslash M_{k-1}|\\
    &\leq \left \lceil \frac{|E(J_{k-1})|}{2} \right \rceil\\
    &=\left \lceil \frac{x_{k-1}}{2} \right \rceil.
\end{align*}

The proof of the theorem is complete.
\end{proof}

\begin{remark}\label{rem:TreeUniLowBound} For any $k\geq 2$, there is an infinite sequence of graphs $G$ containing $1$ cycle, such that 
\begin{eqnarray*}
	\nu_{k}(G) = \left \lfloor \frac{\nu_{k-1}(G) + \nu_{k+1}(G)}{2} \right \rfloor.
\end{eqnarray*}
\end{remark}

\begin{proof} Let $k\geq 2$ be a fixed integer. For a positive integer $l\geq 2$ consider the graph $G$ from Figure \ref{ditoxutyun}. $G$ contains one cycle of length $l$. Every vertex lying on that cycle (denoted by $C_l$) is of degree $k+1$. It is incident to 2 edges lying on the cycle and $k-1$ other edges, whose the other endpoints are degree of 1.

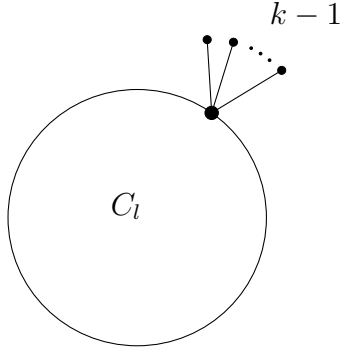
\begin{figure}[ht]
		\begin{center}
			\begin{tikzpicture}[line cap=round,line join=round,>=triangle 45,x=1.0cm,y=1.0cm]
\clip(-7.5,0.) rectangle (10.079941980215155,5.);
\draw(0.,2.) circle (1.6991762710207556cm);
\draw (0.9790737243720234,3.388745708272143)-- (0.92,4.36);
\draw (0.9790737243720234,3.388745708272143)-- (1.265597748851844,4.326077306776009);
\draw (0.9790737243720234,3.388745708272143)-- (1.9000477859370812,3.9528714026082232);
\draw (1.6,5) node[anchor=north west] {$k-1$};
\draw (-0.49379235764686163,2.472943618042581) node[anchor=north west] {$C_{l}$};
\begin{scriptsize}
\draw [fill=black] (0.9790737243720234,3.388745708272143) circle (2.5pt);
\draw [fill=black] (0.92,4.36) circle (1.5pt);
\draw [fill=black] (1.265597748851844,4.326077306776009) circle (1.5pt);
\draw [fill=black] (1.9000477859370812,3.9528714026082232) circle (1.5pt);
\draw [fill=black] (1.498775307648427,4.247061067760704) circle (0.5pt);
\draw [fill=black] (1.6222035733665614,4.175061246091793) circle (0.5pt);
\draw [fill=black] (1.7404889946797735,4.087632891208114) circle (0.5pt);
\end{scriptsize}
\end{tikzpicture}
		\end{center}
		\caption{The infinite sequence of graphs.}
		\label{ditoxutyun}
	\end{figure}
	
It can be easily checked that
\begin{align*}
	\nu_{k-1}(G) &=l\cdot (k-1), \\
	\nu_{k}(G) &=l\cdot (k-1)+\left \lfloor \frac{l}{2} \right \rfloor,\\ \nu_{k+1}(G) &=|E(G)|=l\cdot (k-1)+l,
\end{align*} hence
\begin{eqnarray*}
	\nu_{k}(G) = \left \lfloor \frac{\nu_{k-1}(G) + \nu_{k+1}(G)}{2} \right \rfloor.
\end{eqnarray*}

The proof of the remark is complete.
\end{proof}

Our next theorem verifies Conjecture \ref{BipkConj} for bipartite graphs with at most $1$ cycle.

\begin{theorem}\label{thm:BipTreeUniLowBound} For any $k\geq 2$ and a bipartite graph $G$ containing at most $1$ cycle, 
\begin{eqnarray*}
	\nu_{k}(G) \geq \frac{\nu_{k-1}(G) + \nu_{k+1}(G)}{2}.
\end{eqnarray*}
\end{theorem}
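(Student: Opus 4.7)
My plan is to follow the proof of Theorem \ref{thm:TreeUniLowBound} essentially verbatim by induction on $n=|V(G)|$, exploiting bipartiteness only at the very last step in order to dispense with the floor. As in the previous proof, one may assume $G$ is connected and $\Delta(G)\geq 3$, handling paths and the even cycle $C_\ell$ directly. The structural reductions of Theorem \ref{thm:TreeUniLowBound}---eliminating a pendant edge whose other endpoint has degree $2$, eliminating off-cycle vertices of degree at least $3$ through the four cases indexed by $d_G(x)$ versus $k$, and eliminating on-cycle vertices of degree at least $k+2$---go through unchanged. In each case the algebraic identities linking $\nu_{k-1}(G), \nu_k(G), \nu_{k+1}(G)$ with the corresponding quantities on a smaller graph $G'$ or $G'e$ are identical, and because bipartiteness and the property of having at most one cycle are both preserved under taking subgraphs, the bipartite inductive hypothesis supplies the non-floored conclusion immediately.

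After these reductions $G$ consists of an even cycle $C$ of length $\ell$ together with leaves attached to its vertices, and every vertex of $G$ has degree at most $k+1$. By K\"onig's theorem the bipartite graph $G$ is $(k+1)$-edge-colorable, so $\nu_{k+1}(G) = |E(G)|$. Defining, as in Theorem \ref{thm:TreeUniLowBound}, $x_{k-1}$ and $x_k$ as the minimum number of edges of $C$ that must be removed to obtain a $(k-1)$- respectively $k$-edge-colorable subgraph of $G$, the target inequality $\nu_k(G)\geq(\nu_{k-1}(G)+\nu_{k+1}(G))/2$ becomes
\[
    2\,x_k \leq x_{k-1}.
\]

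The key (and only genuinely new) step is this final inequality. Let $J_{k-1}\subseteq E(C)$ be a minimum removal set witnessing $x_{k-1}$, so $|E(J_{k-1})|=x_{k-1}$ and $\Delta(J_{k-1})\leq 2$. Then every component of $J_{k-1}$ is either a path on $C$ or the entire cycle $C$. Because $G$ is bipartite, $\ell$ is even, so each component of $J_{k-1}$ admits a matching of size at least half its edge-count: a path of length $p$ has $\nu_1=\lceil p/2\rceil\geq p/2$, and the even cycle $C$ has $\nu_1=\ell/2$. Consequently $\nu_1(J_{k-1})\geq |E(J_{k-1})|/2=x_{k-1}/2$. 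Adding a maximum matching $M_{k-1}$ of $J_{k-1}$ back to $G-E(J_{k-1})$ as a new color class gives a $k$-edge-coloring of $G-(E(J_{k-1})\setminus M_{k-1})$, whence $x_k\leq |E(J_{k-1})|-\nu_1(J_{k-1})\leq x_{k-1}/2$, as required.

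The main obstacle is bookkeeping rather than conceptual: one must verify that every branch of the long case analysis of Theorem \ref{thm:TreeUniLowBound} still produces the stated inequality without the floor under the stronger bipartite inductive hypothesis. The only place the earlier argument lost a factor of $\tfrac{1}{2}$ to the floor was in bounding $\nu_1$ of an arbitrary subgraph of $C$ of maximum degree $2$, and this loss is precisely what the evenness of $\ell$ eliminates, so no new ideas are needed beyond those isolated above.
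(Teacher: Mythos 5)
Your proposal is correct and follows exactly the route the paper takes: the authors likewise declare the argument identical to that of Theorem \ref{thm:TreeUniLowBound} and isolate the single change, namely that for the (necessarily even) cycle $C$ one has $\nu_1(J_{k-1})\geq |E(J_{k-1})|/2$, which upgrades $x_k\leq\lceil x_{k-1}/2\rceil$ to $x_k\leq x_{k-1}/2$ and removes the floor. Your use of K\"onig's theorem in place of the paper's explicit extension of a $3$-coloring of $C$ is an inessential cosmetic difference.
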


\begin{proof} The proof of this theorem is identical to that of Theorem \ref{thm:TreeUniLowBound}, with the exception that inequality (\ref{xkinequality}) should be replaced with
\begin{equation*}
x_{k}\leq \frac{x_{k-1}}{2}.
\end{equation*}The latter can be proved in a similar way, by taking into account that if $C$ is not the odd cycle, then 
\[\nu_1(J_{k-1})\geq \frac{|E(J_{k-1})|}{2}= \frac{x_{k-1}}{2}.\]
The proof of the theorem is complete.
\end{proof}

\begin{corollary}\label{TreeCorollary} For any $k\geq 2$ and a tree $T$
\begin{eqnarray*}
	\nu_{k}(T) \geq \frac{\nu_{k-1}(T) + \nu_{k+1}(T)}{2}.
\end{eqnarray*}
\end{corollary}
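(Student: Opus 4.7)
The plan is to observe that the corollary is an immediate specialization of Theorem \ref{thm:BipTreeUniLowBound}. First I would note that any tree $T$ is bipartite: since $T$ is acyclic, it contains no odd cycles, and a standard two-coloring obtained by parity of distance from any fixed root yields a bipartition of $V(T)$. Second, a tree contains exactly $0$ cycles, which certainly satisfies the hypothesis ``contains at most $1$ cycle.''

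Therefore $T$ meets both hypotheses of Theorem \ref{thm:BipTreeUniLowBound}, and applying that theorem with the given $k \geq 2$ yields directly
\[
\nu_k(T) \geq \frac{\nu_{k-1}(T) + \nu_{k+1}(T)}{2},
\]
which is the conclusion of the corollary. There is no real obstacle here; the work is entirely absorbed into the (already established) Theorem \ref{thm:BipTreeUniLowBound}, and the corollary is recorded separately only to highlight the tree case, in which the floor present in Theorem \ref{thm:TreeUniLowBound} can be dropped because the ``bad'' case in the proof of Theorem \ref{thm:BipTreeUniLowBound} (the odd cycle in $C$) cannot occur.
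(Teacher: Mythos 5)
Your proof is correct and matches the paper's intent exactly: the corollary is stated without proof precisely because it is the immediate specialization of Theorem \ref{thm:BipTreeUniLowBound} to trees, which are bipartite and contain no cycles. Your closing observation about why the floor can be dropped is accurate but not needed for the deduction itself.
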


Combined with the classical theorem of K\"onig, Corollary \ref{TreeCorollary} implies:
\begin{corollary}\label{TreePerfect3Corollary} If $T$ is a tree containing a perfect matching and $\Delta(T)=3$, then 
\begin{eqnarray*}
	\nu_{2}(T) \geq \frac{3n-2}{4}.
\end{eqnarray*}
\end{corollary}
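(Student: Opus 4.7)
The plan is to apply Corollary \ref{TreeCorollary} with $k=2$, which yields
\[ \nu_{2}(T) \geq \frac{\nu_{1}(T) + \nu_{3}(T)}{2}, \]
and then evaluate each term on the right-hand side using the two hypotheses on $T$.

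First I would compute $\nu_{1}(T)$. Since by assumption $T$ contains a perfect matching, $n$ is even and $\nu_{1}(T)=n/2$. Next I would compute $\nu_{3}(T)$. Because $T$ is a tree it is bipartite, and by the classical theorem of K\"onig the chromatic index of any bipartite graph equals its maximum degree; since $\Delta(T)=3$, this means $T$ is itself $3$-edge-colorable, so $\nu_{3}(T)=|E(T)|=n-1$.

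Plugging these two values into the bound from Corollary \ref{TreeCorollary} gives
\[ \nu_{2}(T) \geq \frac{n/2 + (n-1)}{2} = \frac{3n-2}{4}, \]
which is exactly the inequality to be proved. No serious obstacle is expected here: the corollary supplies the averaging bound, K\"onig's theorem pins down $\nu_{3}(T)$, and the perfect matching hypothesis pins down $\nu_{1}(T)$, so the result is essentially immediate.
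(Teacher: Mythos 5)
Your proposal is correct and matches the paper's intended argument exactly: the paper derives this corollary by combining Corollary \ref{TreeCorollary} (with $k=2$) with K\"onig's edge-coloring theorem, which gives $\nu_3(T)=n-1$, and the perfect matching hypothesis, which gives $\nu_1(T)=n/2$. The arithmetic $\frac{n/2+(n-1)}{2}=\frac{3n-2}{4}$ is right, so nothing is missing.
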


\begin{remark}\label{rem:BipTreeUniLowBound} For any $k\geq 2$, there is an infinite sequence of bipartite graphs $G$ containing $1$ cycle, such that 
\begin{eqnarray*}
	\nu_{k}(G) = \frac{\nu_{k-1}(G) + \nu_{k+1}(G)}{2}.
\end{eqnarray*}
\end{remark}

\begin{proof} Consider the sequence of graphs $G$ from Remark \ref{rem:TreeUniLowBound} when $l$ is even. Observe that
\begin{eqnarray*}
	\nu_{k}(G) = \frac{\nu_{k-1}(G) + \nu_{k+1}(G)}{2}.
\end{eqnarray*}
The proof of the remark is complete.
\end{proof}




\nocite{*}


\bibliographystyle{abbrv}


\end{document}